\def\maketag@@@#1{\hbox{\m@th\normalfont\normalsize#1}}
\newcommand{\ba}{\mbox{\boldmath{$a$}}}
\newcommand{\bbm}{\begin{bmatrix}}
\newcommand{\ebm}{\end{bmatrix}}
\newcommand{\bit}{\begin{itemize}}
\newcommand{\eit}{\end{itemize}}
\newcommand{\ben}{\begin{enumerate}}
\newcommand{\een}{\end{enumerate}}
\newcommand{\bdesc}{\begin{description}}
\newcommand{\edesc}{\end{description}}
\newcommand{\bea}{\begin{array}}
\newcommand{\eea}{\end{array}}
\newcommand{\tr}{\mbox{\rm Tr}\, }
\newcommand{\beqa}{\begin{eqnarray}}
\newcommand{\eeqa}{\end{eqnarray}}
\newcommand{\ds}{\displaystyle}
\newcommand{\Comment}[1]{}
\newtheorem{prop}{Proposition}
\def\R{{\mathds R}}
\def\C{{\mathds C}}
\def\cC{\mbox{$\CMcal C$}}
\def\cF{\mbox{$\mathcal F$}}
\def\cG{\mbox{$\mathcal G$}}
\def\cL{\mbox{$\mathcal L$}}
\def\cN{\mbox{$\CMcal N$}}
\def\cI{\mbox{$\CMcal I$}}
\def\cP{\mbox{$\mathcal P$}}
\newcommand{\be}{\begin{equation}}
\newcommand{\ee}{\end{equation}}
\newcommand{\bzero}{{\mbox{\boldmath $0$}}}
\newcommand{\boa}{{\mbox{\boldmath $a$}}}
\newcommand{\bn}{{\mbox{\boldmath $n$}}}
\newcommand{\bm}{{\mbox{\boldmath $m$}}}
\newcommand{\bv}{{\mbox{\boldmath $v$}}}
\newcommand{\bx}{{\mbox{\boldmath $x$}}}
\newcommand{\bq}{{\mbox{\boldmath $q$}}}
\newcommand{\bz}{{\mbox{\boldmath $z$}}}
\newcommand{\bI}{{\mbox{\boldmath $I$}}}
\newcommand{\bJ}{{\mbox{\boldmath $J$}}}
\newcommand{\bM}{{\mbox{\boldmath $M$}}}
\newcommand{\bP}{{\mbox{\boldmath $P$}}}
\newcommand{\bS}{{\mbox{\boldmath $S$}}}
\newcommand{\bZ}{{\mbox{\boldmath $Z$}}}
\DeclareMathOperator*{\argmax}{arg\,max}
\newcommand{\btheta}{{\mbox{\boldmath $\theta$}}}
\newcommand{\dmax}{\begin{displaystyle}\max\end{displaystyle}}
\newcommand{\test}{\mbox{$
\begin{array}{c}
\stackrel{ \stackrel{\textstyle H_1}{\textstyle >} }{
\stackrel{\textstyle <}{\textstyle H_0} }
\end{array}
$}}
\begin{document}

\title{A Unifying Framework for Adaptive Radar Detection in the Presence of Multiple Alternative Hypotheses}

\author{Pia Addabbo, \emph{Senior Member, IEEE}, Sudan Han, Filippo Biondi, \emph{Member, IEEE}, Gaetano Giunta, \emph{Senior Member, IEEE}, 
	and Danilo Orlando, \emph{Senior Member, IEEE}
	\thanks{P. Addabbo is with Universit\`a degli studi ``Giustino Fortunato'', Benevento, Italy. E-mail: {\tt 
			p.addabbo@unifortunato.eu}.}
	\thanks{S. Han is with Defense Innovation Institute, Beijing, China E-mail: {\tt xiaoxiaosu0626@163.com}.}
	\thanks{F. Biondi is with Italian Ministry of Defence. Email: {\tt biopippo@gmail.com}.}
	\thanks{G. Giunta is with the Department of Engineering, University of Roma Tre, 00146 Rome, Italy. E-mail: {\tt gaetano.giunta@uniroma3.it}.}
	\thanks{D. Orlando is with the Faculty of Engineering, Universit\`a degli Studi ``Niccol\`o Cusano'', 00166 Roma, Italy. 
		E-mail: {\tt danilo.orlando@unicusano.it}.}
}

\maketitle

\begin{abstract}
In this paper, we develop a new elegant framework relying on the Kullback-Leibler Information Criterion to address the 
design of one-stage adaptive detection architectures for multiple hypothesis testing problems. Specifically, at the 
design stage, we assume that several alternative hypotheses may be in force and that only one null hypothesis exists. 
Then, starting from the case where all the parameters are known and proceeding until the case where the adaptivity with
respect to the entire parameter set is required, we come up with decision schemes for multiple alternative hypotheses 
consisting of the sum between the compressed log-likelihood ratio based upon the available data and a penalty term 
accounting for the number of unknown parameters. The latter rises from suitable approximations of the Kullback-Leibler
Divergence between the true and a candidate probability density function. Interestingly, under specific constraints, the 
proposed decision schemes can share the constant false alarm rate property by virtue of the Invariance Principle.
Finally, we show the effectiveness of the proposed framework through the application to examples of practical value in 
the context of radar detection also in comparison with two-stage competitors. This analysis highlights that the 
architectures devised within the proposed framework represent an effective means to deal with detection problems where 
the uncertainty on some parameters leads to multiple alternative hypotheses.
\end{abstract}

\begin{IEEEkeywords}
Adaptive Radar Detection, Constant False Alarm Rate, Generalized Likelihood Ratio Test, Kullback-Leibler Information Criterion, Model Order Selection, Multiple Hypothesis Testing, Nuisance Parameters, Radar, Statistical Invariance.
\end{IEEEkeywords}

\section{Introduction}
Nowadays, modern radar systems incorporate sophisticated signal processing algorithms which take advantage of the computational power made available by recent advances in technology. This growth in complexity is dictated by the fact these systems have to face with more and more challenging scenarios where conventional algorithms might fail or exhibit poor performance. For instance, in target-rich environments, structured echoes contaminate data used to estimate the spectral properties of the interference (also known as training or secondary data) leading to a dramatic attenuation of the signal of interest components and, hence, to a nonnegligible number of missed detections \cite{bergin2002gmti}. In such case, radar system should be endowed with signal processing schemes capable of detecting and suppressing the outliers in order to make the training data set homogeneous \cite{wicksOutliers, AdveOuliers, JiangOutliers, RangaswamyOutliers}.
Another important example concerns high resolution radars, which can resolve a target into a number of different scattering centers depending on the radar bandwidth and the range extent of the target \cite{ScheerMelvin,PiaIET}. The classic approach to the detection of range-spread targets consists in processing one range bin at a time despite the fact that contiguous cells contain target energy. As a consequence, classic detection algorithms do not collect as much energy as possible to increase the Signal-to-Interference-plus-Noise Ratio (SINR).
To overcome this drawback, architectures capable of detecting distributed targets by exploiting a preassigned number of contiguous range bins have been developed \cite{WLiuRao,JunLiu02,Gerlach,DD,GLRT-based}.
These energy issues also hold for multiple point-like targets. In fact, detection algorithms which can take advantage of the total energy associated with each point-like target are highly desirable. 

In the open literature, the existing examples concerning the detection of either multiple point-like or range-spread targets share the assumption that the number of scatterers (or at least an upper bound on it) is known and are based upon the Maximum Likelihood Approach (MLA) \cite{BR_multipleTargets,BOR_multipleTargets}. 
However, in scenarios of practical value, such a priori information is not often available especially when the radar system is operating in search mode.
Moreover, the problem of jointly detecting multiple point-like targets is very difficult since target positions and, more importantly, target number are unknown parameters that must be estimated. Thus, conversely to the conventional detection problems that comprise two hypotheses, namely the noise-only (or null) and the signal-plus-noise (or alternative) hypothesis, this lack of a priori information naturally leads to multiple alternative hypotheses with the consequence that the radar engineer has to face with composite multiple hypothesis tests. 

Besides target-dense environments, another operating situation leading to multiple hypothesis tests is related 
to possible electronic attacks by adversary forces (jammers). 
These attacks comprise active techniques aimed at protecting a platform from being detected and tracked by the radar \cite{ScheerMelvin} through two approaches: masking and deception.
More precisely, noncoherent jammers or Noise-Like Jammers (NLJs) attempt to mask targets generating nondeceptive interference which blends into the thermal noise of the radar receiver degrading the radar sensitivity due to an increase of the Constant False Alarm Rate (CFAR) threshold \cite{ScheerMelvin,EW101,FarinaSkolnik}. On the other hand, the Coherent Jammers (CJs) illuminate the victim radar by means of low duty-cycle signals with specific parameters that, when estimated by the radar processor, force the latter to allocate resources to handle false targets. In fact, CJs are equipped with electronics apparatuses capable of receiving, modifying, amplifying, and retransmitting the radar's own signal to create false targets with radar's range, Doppler, and angle far away from the true position of the platform under protection \cite{FarinaSkolnik,ScheerMelvin}.

A possible way to react to this kind of interference relies on the use of decision schemes devised by modifying the 
conventional detection problem with additional hypotheses associated with the presence of such threats 
\cite{FarinaECM,carotenutoNLJ}. 
In \cite{FarinaECM}, adaptive detection and discrimination between useful signals and CJs in the presence of thermal 
noise, clutter, and possible NLJs is addressed by considering an additional hypothesis under which data contain the 
CJs only. In addition, the latter is assumed to lay on the orthogonal complement of the subspace spanned by the 
nominal steering vector (after whitening by the true covariance matrix of the composite disturbance). 
The resulting multiple hypothesis test is solved resorting to an approach based upon a generalized Neyman-Pearson 
criterion \cite{GeneralizedNP}.
However, from a computational point of view, detection threshold setting might require an onerous load and, 
more importantly, such solution is effective when the multiple hypotheses are not nested. As a matter of fact,
in the presence of nested hypotheses, the MLA and, hence, the Generalized Likelihood Ratio Test (GLRT),
may fail because the likelihood function monotonically increases with the hypothesis order (or model order). 
As a consequence, the MLA experiences a natural inclination to overestimate the hypothesis order. 
An alternative approach consists in looking for detection schemes that incorporate the expedients of the 
so-called Model Order Selection (MOS) rules \cite{Anderson,Stoica1,BICneath,BhansaliGIC}, 
which include the parameter number diversity to moderate the overestimation attitude of the MLA. In \cite{carotenutoNLJ}, 
the authors follow the last approach to conceive two-stage detection architectures for multiple NLJs whose number is
unknown. Specifically, the first stage exploits the MOS rules to provide an estimate of the number of NLJs, 
whereas the second stage consists of a jammer detector that uses the estimate obtained at the first stage.
Finally, it is important to observe that MOS rules can be adapted to
accomplish detection tasks by also considering the model order ``$0$'' which is associated with the null hypothesis
\cite{kailathMultipleHypotheses,FishlerMultHyp,VanTrees4}. However, in this case, it is not possible to set
any threshold in order to guarantee a preassigned Probability of False Alarm ($P_{fa}$) and, more importantly,
the CFAR property, which is of primary concern in radar, cannot be a priori stated.

In this paper, we develop an elegant framework relying on the Kullback-Leibler Information Criterion (KLIC) 
\cite{kullback1951} to address multiple hypothesis testing problems where there exist many alternative hypotheses. This 
framework provides an important interpretation of both the Likelihood Ratio Test (LRT) and the GLRT from an information 
theoretic standpoint and, remarkably, it lays the theoretical foundation for the design of new one-stage decision 
schemes for multiple hypothesis tests. This result represents the main technical contribution of this paper and, 
interestingly, such new detection architectures share a common structure that is given by the sum of a conventional
decision statistic and a penalty term as well as the generic structure of a KLIC-based MOS rule consists of the 
compressed log-likelihood plus a penalty term.
The starting point of the developed framework is the case where all the parameters are known showing that 
under suitable regularity conditions the LRT approximates a test which selects the hypothesis with the associated probability density function (pdf) minimizing the Kullback-Leibler Divergence (KLD) with respect to the true data distribution. In addition, the LRT coincides with such test when the KLD is measured with respect to the Empirical Data Distribution (EDD).
Then, we guide the reader towards more difficult scenarios where the distribution parameters are no longer known. Specifically, we resort to the Taylor series approximations of the KLD, that are also used to derive the MOS rules, in order to come up with decision schemes capable of moderating the overfitting inclination of the MLA. From a different perspective, the same results can be obtained by regularizing the pdf under the generic alternative hypothesis through a suitable prior for the unknown model order and applying a procedure that combines the MLA and the Bayesian estimation \cite{jointMAP_ML,Stoica1} (see the appendix for further details).
The proposed theoretical framework is, then, completed with the investigation of the CFAR behavior of these decision schemes framing the analysis in the more general context of statistical invariance and providing two propositions 
which allows to state when the newly proposed decision architectures are invariant with respect to a given group 
of transformations and, possibly, enjoy the CFAR property. Finally, we present numerical examples obtained over simulated data and concerning three different radar detection problems also in comparison with two-stage architectures where the first stage is aimed at estimating the model order while the second stage is a conventional detector. The analyses highlight that the developed MOS-based detectors are capable of providing good detection capabilities in several contexts of practical interest.

The remainder of this paper is organized as follows: the next section contains preliminary definitions
and formulates the detection problem at hand in terms of a multiple hypothesis test. Section \ref{Sec:Architectures}
describes the proposed framework and introduces the new architectures, whereas in Section \ref{sec_invariance} some discussions about the CFAR properties ensured exploiting the Principle of Invariance are also provided. Then, Section \ref{Sec:Applications} provides some illustrative examples to assess the performance of the new architectures also in comparison to natural competitors. Concluding remarks and future research tracks are given in Section \ref{Sec:Conclusions}. The derivation of an alternative framework leading to the same architectures 
is confined to the Appendix.

\subsection{Notation}
In the sequel, vectors and matrices are denoted by boldface lower-case and upper-case letters, respectively.
The symbols $\det(\cdot)$, $\tr(\cdot)$, $(\cdot)^T$, $(\cdot)^\dag$, and $(\cdot)^{-1}$ denote the determinant, 
trace, transpose, conjugate transpose, and inverse respectively. As to numerical sets, $\R$ is the set of 
real numbers, $\R^{N\times M}$ 
is the Euclidean space of $(N\times M)$-dimensional real matrices (or vectors if $M=1$), $\C$ is the set of 
complex numbers, and $\C^{N\times M}$ is the Euclidean space of $(N\times M)$-dimensional complex matrices 
(or vectors if $M=1$). The cardinality of a set $\Omega$ is denoted by $|\Omega|$.
The Dirac delta function is indicated by $\delta(\cdot)$.
$\bI_N$ stands for the $N \times N$ identity matrix, while $\bzero$ is the null vector or matrix of proper size. The acronym i.i.d. mean probability density function and independent and identically  distributed, respectively, while symbol $E_f[\cdot]$ denotes the statistical expectation with respect to the pdf $f$.
If $A$ and $B$ are two continuous random variables, $f(A|B)$ is the conditional pdf of $A$ given $B$, whereas
the conditional probability of an event $A$ given the event $B$ is represented as $P(A|B)$.
Finally, we write $\bx\sim\cC\cN_N(\bm, \bM)$ if $\bx$ is a complex circular $N$-dimensional normal vector with mean $\bm$ and covariance matrix $\bM \succ \bzero$, whereas $\bx\sim f(\bx;\btheta)$ means that $f$ is the pdf of $\bx$ with parameter vector $\btheta$.

\section{Problem Formulation and Preliminary Definitions}
\label{Sec:Problem_Formulation}
Let us consider a radar system, equipped with $N$ space and/or time identical channels, which collects a 
data matrix $\bZ=[\bz_1,\ldots,\bz_K]\in\C^{N\times K}$ whose\footnote{Observe that in the case of Space-Time
Adaptive Processing (STAP), $N$ represents the number of space-time channels, whereas when the system transmits either a single pulse through an array of sensors or exploits a single antenna to transmit a pulse train, $N$ represents the 
number of elements of either the spatial array or the number of transmitted pulses, respectively \cite{Richards}. 
Finally, $K$ may represent the number of range bins or the number of pulses when a slice of the STAP datacube
is processed.} columns can be modeled as statistically independent random vectors whose distribution belongs to a 
preassigned family. For simplicity and in order not to burden the notation (a point better explained below), 
we assume that these vectors share the same distribution parameters (identically distributed) and that their 
joint unknown pdf is denoted by $\bar{f}(\bZ;\btheta)$ with $\btheta\in\R^{p\times 1}$ the parameter vector 
taking on value in a specific parameter space, $\Theta\subseteq\R^{p\times 1}$ say. Now, a conventional binary 
decision problem partitions the latter into two subsets $\Theta_0$ and $\Theta_1$ corresponding to 
the null ($H_0$) and the alternative ($H_1$) hypothesis, respectively. Thus, denoting by $\btheta_0$ 
the elements of $\Theta_0$ and by $\btheta_1$ the elements of $\Theta_1$, then the pdf of $\bZ$ under 
the $i$th hypothesis can be written as $f_i(\bZ;\btheta_i)$, $i=0,1$. It is important to stress here 
that $\bar{f}(\bZ;\btheta)$ is the actual pdf and is unknown, whereas $f_i(\bZ;\btheta_i)$ is the pdf 
of $\bZ$ when\footnote{In what follows, we assume correctly specified models, namely that data 
distribution family is known.} $\btheta=\btheta_i\in\Theta_i$.

Generally speaking, in many detection applications, the actual value of the parameter vector is unknown 
or, at least, partially known. In addition, not all the entries of $\btheta$ are useful to take a decision
for the specific problem at hand. As a consequence, we can split $\btheta$ as $\btheta=[\btheta_r^T, \btheta_s^T]^T$, 
where $\btheta_r\in\Theta_r\subseteq\R^{p_r\times 1}$ contains the parameters of interest, while 
the components of $\btheta_s\in\Theta_s\subseteq\R^{p_s\times 1}$ represent
the parameters that do not enter into the decision process and are called {\em nuisance parameters} \cite{KayBook}. 
From a more general perspective, it would be possible to associate a parameter vector of interest to 
each column of $\bZ$ with the consequence that $\bz_k$s are no longer identically distributed. However,
as stated before, we prefer to proceed assuming that $\bz_k$s share the same parameter vector in order to
maintain an easy notation and because the extension to the more general case is straightforward as we will
show in Section \ref{Sec:Applications}. Thus, with the above remarks in mind, a conventional binary hypothesis 
testing problem can be expressed as\footnote{Note that $p_r$ is the
number of the parameters of interest and $p_s$ is the number of the nuisance parameters. It follows that
$p=p_r+p_s$.}
\be
\left\{
\begin{array}{l}
\begin{split}
H_0: \ds\bZ \sim f_{0}(\bZ; \btheta_0) &= f_{0}(\bZ; \btheta_{r,0}, \btheta_{s})\\
&= \prod_{k=1}^K g_{0}(\bz_k; \btheta_{r,0}, \btheta_{s}),
\\
H_{1}: \ds\bZ \sim f_{1}(\bZ; \btheta_1) &= f_{1}(\bZ; \btheta_{r,1}, \btheta_{s})\\
&= \prod_{k=1}^K g_{1}(\bz_k; \btheta_{r,1}, \btheta_{s}),
\end{split}
\end{array}
\right.
\label{eqn:detectionProblem}
\ee
where $\btheta_i=[\btheta_{r,i}^T, \btheta_{s}^T]^T\in\R^{p\times 1}$, $i=0,1$, with 
$\btheta_{r,i}\in\Theta_r^{i}\subseteq\R^{p_{r}\times 1}$ the parameter vector of interest under $H_i$,
$g_{i}(\cdot;\cdot)$, $i=0,1$, is the pdf of $\bz_k$, $k=1,\ldots,K$, under $H_i$.

Two remarks are now in order. First, note that $\{\Theta_r^0,\Theta_r^1\}$ is a partition of $\Theta_r$. 
Second, the above problem assumes that $p_r$, $p_s$, and, hence, the total number of unknown parameters $p$, 
are perfectly known. However, in many radar (and, generally speaking, signal processing) applications, 
it is not seldom for $p$ to be unknown under the alternative hypothesis due to the fact that the size 
of $\btheta_{r,1}$ might depend on the specific operating scenario \cite{Stoica1}. 
For instance, radar systems might face with situations where an unknown number of targets are present 
in the surveillance area \cite{1605248,BR_multipleTargets} or be under the attack of an unknown number of 
jammers \cite{carotenutoNLJ,8781902,8835670}. Under this assumption, the hypothesis test can be modified as
\be
\begin{cases}
\mbox{under $H_0$}: \ \btheta_{r,0}\in\Theta_r^{0}\subseteq\R^{p_{r,0}\times 1},
\\
\mbox{under $H_1$}: \ \btheta_{r,1}\in\Theta_r^{m}\subseteq\R^{p_{r,m}\times 1}, \ p_{r,m}\in\Omega_r,
\end{cases}
\ee
where $p_{r,0}$ and $\Omega_r=\{ p_{r,1},\ldots, p_{r,M} \}$ with $p_{r,1}\leq\ldots\leq p_{r,M}$ are known, 
while $p_{r,m}$ is unknown. It follows that the pdf of $\bZ$ (and, hence, that of $\bz_k$) under
$H_1$ depends on $p_{r,m}$ and, more important, the uncertainty on the latter leads to a testing problem 
formed by multiple (possibly nested) $H_1$ hypotheses, i.e.,
\be
\begin{cases}
H_0 : & \ds\bZ \sim f_{0}(\bZ; \btheta_{r,0}, \btheta_{s}),
\\
H_{{1,1}} : & \bZ \sim f_{1,1}(\bZ; \btheta_{r,1}, \btheta_{s}, p_{r,1}),
\\
\vdots & \vdots
\\
H_{{1,M}} : & \bZ \sim f_{1,M}(\bZ; \btheta_{r,1}, \btheta_{s}, p_{r,M}),
\end{cases}
\label{eqn:MHP}
\ee
where $f_{1,m}(\bZ; \btheta_{r,1}, \btheta_{s}, p_{r,m})=\prod_{k=1}^K g_{1,m}(\bz_k;\btheta_{r,1}, 
\btheta_{s}, p_{r,m})$ is the pdf of $\bZ$ under $H_{1,m}$, namely when $\btheta_{r,1}\in\Theta_r^{m}$, 
with $g_{1,m}(\bz_k;\btheta_{r,1}, \btheta_{s}, p_{r,m})$ the pdf of $\bz_k$ under $H_{1,m}$.

In the next section, we propose an {\em Information-theoretic} based approach to deal with problem (\ref{eqn:MHP}) 
exploiting the KLIC \cite{kullback1951}. Specifically, this criterion
relies on the measurement of a certain distance, the so-called KLD, between a candidate distribution belonging to the family of densities
$\cF=\left\{ f_0, \ f_{1,m}, \ m\in\{ 1,\ldots, M\} \right\}$
and the actual distribution of $\bZ$, which is assumed to lie in $\cF$ and denoted by
\be
\bar{f}(\bZ;\btheta) = \prod_{k=1}^K \bar{g}(\bz_k;\btheta),
\ee
where $\bar{g}(\bz_k;\btheta)$ is the true pdf of $\bz_k$, $k=1,\ldots,K$.
Besides, we suppose that the inequalities required to invoke the Kintchine's Strong Law of Large Numbers \cite{senSinger} are valid, namely
\begin{align}
|E_{\bar{g}}[\log g_{1,m}(\bz_k; \btheta_{r,1}, \btheta_{s}, p_{r,m})]| &<+\infty,
\label{eqn:finteMeanH1}
\\
|E_{\bar{g}}[\log g_{0}(\bz_k; \btheta_{r,0}, \btheta_{s})]| &<+\infty.
\label{eqn:finteMeanH0}
\end{align}
as well as the following ``regularity conditions'' \cite{Stoica1}
\begin{equation}
\begin{split}
\frac{1}{T} &\frac{\partial^2}{\partial \btheta\partial\btheta^T}
\left[\log \prod_{k=1}^K f_{1,m}(\bz_k; \btheta_{r,1}, \btheta_{s}, p_{r,m})\right]\\
&\overset{T\rightarrow \infty}{\longrightarrow}
\frac{1}{T} E\left\{ \frac{\partial^2}{\partial \btheta\partial\btheta^T}
\left[\log \prod_{k=1}^K f_{1,m}(\bz_k; \btheta_{r,1}, \btheta_{s}, p_{r,m})\right]
\right\},
\end{split}
\end{equation}
and ${p}/{T}\overset{T\rightarrow \infty}{\longrightarrow} 0$,
which are required to suitably approximate the KLD. In the above equations, $T$ represents the total number of
real-valued observations, that, for the problem at hand, is equal to $2NK$ since we are dealing with complex vectors.

Note that a ``minimum information distance'' selection criterion has already been successfully applied 
for model order estimation giving rise to the so-called MOS rules \cite{Stoica1,selen2007model}. 
The resulting selection architectures share the same structure consisting of a fitting term 
(the compressed log-likelihood function) plus an adjustment which also depends on the number of parameters. 
As a consequence, the parameter number diversity comes into play to moderate the overfitting inclination 
of the compressed likelihood in the case of nested hypothesis. In fact, in this context, the KLIC-based 
rules can provide satisfactory classification performance whereas the Maximum Likelihood (ML) approach may
fail because the likelihood function monotonically increases with $p_{r,m}$ and the ML estimate (MLE) of $p_r$ 
will always be $p_{r,M}$ (or, equivalently, the MLE of $p$ will be $p_{r,M}+p_s$).

To conclude this preliminary section, for the reader ease, we recall that the KLD \cite{kullback1951} (also called
relative entropy) between $\bar{f}$ and $f_{1,m}$, namely the pdf of a generic candidate model under $H_1$, can be
written as \cite{cover2012elements}
\be
D(\bar{f},f_{1,m}) = 
\int_{-\infty}^{\infty} \bar{f}(\bZ) \log\frac{\bar{f}(\bZ)}{f_{1,m}(\bZ; \btheta_{r,1}, \btheta_{s}, p_{r,m})}d\bZ,
\label{eqn:definitionKLD}
\ee
where $d\bZ=dz^r_{1,1}dz^i_{1,1}\ldots dz^r_{N,K}dz^i_{N,K}$ with $z^r_{n,k}$ and $z^i_{n,k}$ the real
and imaginary parts of the $n$th component of $\bz_k$, and can be decomposed into the sum of two 
terms\footnote{We assume that the considered pdfs exist with respect to a Lebesgue measure.}
\begin{align}
D(\bar{f},f_{1,m}) &= \int_{-\infty}^{\infty} \bar{f}(\bZ) \log{\bar{f}(\bZ)}d\bZ \nonumber\\
&-\int_{-\infty}^{\infty} \bar{f}(\bZ) \log{f_{1,m}(\bZ; \btheta_{r,1}, \btheta_{s}, p_{r,m})}d\bZ \nonumber
\\
&=-h(\bar{f}) + h(\bar{f},f_{1,m}),
\label{eqn:KLdiscrepancySum_r}
\end{align}
where $h(\bar{f})$ is the differential entropy of $\bar{f}$ and $h(\bar{f},f_{1,m})$ is the cross entropy between 
$\bar{f}$ and $f_{1,m}$. Note that unlike $h(\bar{f})$, $h(\bar{f},f_{1,m})$ depends on the $m$th model 
(or hypothesis). Analogously, we can write the KLD with respect to the pdf under $H_0$ as
\be
D(\bar{f},f_{0}) = -h(\bar{f}) + h(\bar{f},f_{0}).
\label{eqn:KLdiscrepancySum_0}
\ee
Recall that $D(\cdot,\cdot)$ is not a true distance between distributions since it is not symmetric and does 
not satisfy the triangle inequality \cite{cover2012elements}. Nonetheless, it is often useful to think of the 
KLD as a ``distance'' between distributions. Finally, the KLD can be interpreted as the information loss 
when either $f_{1,m}$ or $f_0$ is used to approximate $\bar{f}$ \cite{Anderson}.

\section{KLIC-based Decision Rules}
\label{Sec:Architectures}
In this section, we exploit the KLIC to devise decision schemes for problem \eqref{eqn:MHP}. To this end, 
we proceed by considering the case where all the parameters are known and then evolve to more difficult cases
where the parameters become unknown. It is important to observe that we define here an information theoretic 
framework where well-established decision rules as the Likelihood Ratio Test and the GLRT are suitably encompassed.

\subsection{KLIC-based Detectors for Known Model and Parameters}
In this case, the number of alternative hypotheses is reduced to $1$ and problem \eqref{eqn:MHP} turns into 
problem \eqref{eqn:detectionProblem} with the additional assumption that $\btheta_0$ and $\btheta_1$ 
are known. Moreover, the number of parameters of interest is $p_{r,\bar{m}}$ and is known. As a consequence, 
data distribution is completely determined by the hypotheses and the true pdf of $\bZ$ belongs to the following 
family $\cF_{\theta,p}=\{ f_0, \ f_{1,\bar{m}} \}$.
Thus, a natural test based upon KLIC would decide for the hypothesis where the associated pdf (computed at $\bZ$) exhibits the ``minimum distance'' from $\bar{f}$. Otherwise stated, such test can be formulated as
\be
D(\bar{f},f_{0}) \test D(\bar{f},f_{1,\bar{m}}).
\label{eqn:decision01}
\ee
The above rule selects $H_0$ if the distance between $\bar{f}$ and $f_0$ is lower than that between $\bar{f}$ 
and $f_{1,\bar{m}}$. In the opposite case, it decides for $H_1$. Now, let us exploit \eqref{eqn:KLdiscrepancySum_r} 
and \eqref{eqn:KLdiscrepancySum_0} to recast \eqref{eqn:decision01} as
\be
-h(\bar{f}) + h(\bar{f},f_{0}) \test -h(\bar{f}) + h(\bar{f},f_{1,\bar{m}}),
\ee
or, equivalently,
\be
h(\bar{f},f_{0}) - h(\bar{f},f_{1,\bar{m}}) \test 0.
\label{eqn:decision01new}
\ee
From the information theory point of view, the above forms highlight that the considered decision rule minimizes the loss of information which occurs when $\bar{f}$ is approximated with $f_{1,\bar{m}}$ or $f_0$ \cite{Anderson}. Moreover, using \eqref{eqn:KLdiscrepancySum_r}, we can rewrite \eqref{eqn:decision01new} as
\begin{align}\label{eqn:LRT00}
& E_{\bar{f}}[\log f_{1,\bar{m}}(\bZ; \btheta_{r,1}, \btheta_{s}, p_{r,\bar{m}})] \nonumber\\
&- E_{\bar{f}}[\log f_{0}(\bZ; \btheta_{r,0}, \btheta_{s})] \test 0
\\
\Rightarrow &
\sum_{k=1}^K E_{\bar{g}}[\log g_{1,\bar{m}}(\bz_k; \btheta_{r,1}, \btheta_{s}, p_{r,\bar{m}})]\nonumber\\
& - \sum_{k=1}^K 
E_{\bar{g}}[\log g_{0}(\bz_k; \btheta_{r,0}, \btheta_{s})] \test 0
\\
\Rightarrow &
K E_{\bar{g}}[\log g_{1,\bar{m}}(\bz; \btheta_{r,1}, \btheta_{s}, p_{r,\bar{m}})]\nonumber \\
& - K E_{\bar{g}}[\log g_{0}(\bz; \btheta_{r,0}, \btheta_{s})] \test 0
\label{eqn:LRT_third}
\\
\Rightarrow
& E_{\bar{g}}[\log g_{1,\bar{m}}(\bz; \btheta_{r,1}, \btheta_{s}, p_{r,\bar{m}})] \nonumber\\
&- E_{\bar{g}}[\log g_{0}(\bz; \btheta_{r,0}, \btheta_{s})] \test 0
\label{eqn:KLDtest_zk}
\\
\Rightarrow & \Lambda(\bZ;\btheta_{r,1}, \btheta_{r,0}, \btheta_{s}, \bar{m}) = 
\nonumber\\
& E_{\bar{g}}\left[\log \frac{g_{1,\bar{m}}(\bz; \btheta_{r,1}, \btheta_{s}, p_{r,\bar{m}})}
{g_{0}(\bz; \btheta_{r,0}, \btheta_{s})}\right] \test 0.
\label{eqn:LRT01}
\end{align}
Note that starting from \eqref{eqn:LRT_third}, we have dropped the subscript of $\bz_k$ since $\bz_1,\ldots,\bz_K$ 
are i.i.d. random vectors. Test \eqref{eqn:LRT01} cannot be applied in practice because it involves the 
computation of the expected log-likelihood ratio and requires the knowledge of $\bar{g}(\cdot)$ (or, equivalently, 
of $\bar{f}(\cdot)$). For this reason, we replace $\Lambda$ with a suitable estimate which is function of 
the observed data. To this end, we resort to the following unbiased estimator \cite{EGUCHI20062034}
\be
\begin{split}
\widehat{\Lambda}(\bZ;\btheta_{r,1}, \btheta_{r,0}, \btheta_{s}, \bar{m}) &= \frac{1}{K} \sum_{k=1}^K 
\log \frac{g_{1,\bar{m}}(\bz_k; \btheta_{r,1}, \btheta_{s}, p_{r,\bar{m}})}{g_{0}(\bz_k; \btheta_{r,0}, \btheta_{s})} \\
& = \frac{1}{K} \log \frac{f_{1,\bar{m}}(\bZ; \btheta_{r,1}, \btheta_{s}, p_{r,\bar{m}})}
{f_{0}(\bZ; \btheta_{r,0}, \btheta_{s})}.
\end{split}
\ee
In fact, since the technical assumptions of the Kintchine's Strong Law of Large Numbers hold true due to
\eqref{eqn:finteMeanH1} and \eqref{eqn:finteMeanH0}, in the limit for $K\rightarrow +\infty$, we have that
$\widehat{\Lambda}$ converges almost surely to $\Lambda$, namely
\be
\widehat{\Lambda}(\bZ;\btheta_{r,1}, \btheta_{r,0}, \btheta_{s}, \bar{m}) \xrightarrow{a.s.} 
\Lambda(\bZ;\btheta_{r,1}, \btheta_{r,0}, \btheta_{s}, \bar{m}).
\ee
Summarizing, test \eqref{eqn:LRT01} is replaced by
\be
\widehat{\Lambda}(\bZ;\btheta_{r,1}, \btheta_{r,0}, \btheta_{s}, \bar{m}) \test 0,
\label{eqn:LRT02}
\ee
where $\widehat{\Lambda}$ is a random variable whose value depends on the observed data.
An alternative derivation for \eqref{eqn:LRT02} consists in replacing $f_{1,\bar{m}}(\cdot)$ and $f_{0}(\cdot)$ with 
$g_{1,\bar{m}}(\cdot)$ and $g_0(\cdot)$, respectively, in \eqref{eqn:decision01}, while the EDD, 
whose expression is $\varXi(\bz)=\frac{1}{K}\sum_{k=1}^K \delta(\bz-\bz_k)$,
is used in place of $\bar{f}(\cdot)$. Thus, the KLD between the EDD and the distribution of a generic $\bz_k$ 
is measured to decide which hypothesis is in force. As a matter of fact, criterion (\ref{eqn:decision01}) becomes
\begin{align}
& D(\varXi,g_0) \test D(\varXi,g_{1,\bar{m}})
\label{eqn:decisionEDD_00}
\\
\Rightarrow &E_{\varXi}[\log g_{1,\bar{m}}(\bz; \btheta_{r,1}, \btheta_{s}, p_{r,\bar{m}})] \nonumber \\
&- E_{\varXi}[\log g_{0}(\bz; \btheta_{r,0}, \btheta_{s})] \test 0
\\
\Rightarrow &\sum_{k=1}^K\log g_{1,\bar{m}}(\bz_k; \btheta_{r,1}, \btheta_{s}, p_{r,\bar{m}}) \nonumber \\
&- \sum_{k=1}^K\log g_{0}(\bz_k; \btheta_{r,0}, \btheta_{s}) \test 0
\\
\Rightarrow &\log \frac{\ds\prod_{k=1}^K g_{1,\bar{m}}(\bz_k; \btheta_{r,1}, \btheta_{s}, p_{r,\bar{m}})} 
{\ds\prod_{k=1}^K g_{0}(\bz_k; \btheta_{r,0}, \btheta_{s})} \test 0.
\end{align}
Finally, note that the detection threshold of test \eqref{eqn:LRT02} is set to zero and, as a consequence, it 
does not allow for a control on the probability of type I error also known as 
$P_{fa}$ in Detection Theory \cite{KayBook}. 
In order to circumvent this limitation, the decision rule can be modified as
\be
\log \frac{f_{1,\bar{m}}(\bZ; \btheta_{r,1}, \btheta_{s}, p_{r,\bar{m}})}
{f_{0}(\bZ; \btheta_{r,0}, \btheta_{s})} \test \eta,
\label{eqn:NP-test}
\ee
where the detection threshold, $\eta$, is suitably tuned in order to guarantee the 
desired\footnote{Hereafter, symbol $\eta$ is used to denote the generic detection threshold.} $P_{fa}$. 
Remarkably, the decision rule \eqref{eqn:NP-test} is statistically equivalent to the 
LRT or the Neyman-Pearson test \cite{lehmann1986testing}.

\subsection{KLIC-based Detectors for Known Model and Unknown Parameters}
Let us now consider \eqref{eqn:LRT00} and assume that only the number of parameters of interest, $p_{r,\bar{m}}$ say, 
is known, whereas $\btheta_{r,1}$, $\btheta_{r,0}$, and $\btheta_{s}$ are unknown. In this case the family of 
candidate models becomes
\be
\begin{split}
\cF_{p} & = \cF_{0} \cup \cF_{1} = \{f_0(\cdot;\btheta_{r,0},\btheta_s): \btheta_{r,0}\in \Theta_r^0, \ 
\btheta_{s}\in \Theta_s \}\\
& \cup \{f_{1,\bar{m}}(\cdot;\btheta_{r,1},\btheta_s,p_{r,\bar{m}}): \ \btheta_{r,1}\in \Theta_r^{\bar{m}}, \ 
\btheta_{s}\in \Theta_s \},
\end{split}
\ee
where $\Theta_r^0$ and $\Theta_r^{\bar{m}}$ form a partition of the parameter space of interest while $\Theta_s$ 
is the nuisance parameter space. Note that in this case, the hypotheses of \eqref{eqn:detectionProblem} are 
composite implying that, in order to build up a decision rule based upon \eqref{eqn:decision01}, the unknown 
parameters $\btheta_{r,1}$, $\btheta_{r,0}$, and $\btheta_{s}$ must be estimated from data. Among different
alternatives, we resort to the ML approach, which enjoys ``good'' asymptotic properties \cite{KayBook_Estimation}. 
In fact, given a model, the consistency (when it holds) of the MLE 
ensures that it converges in probability to the true parameter value, which is also the minimizer of the 
KLD (see \cite{Anderson,geyer} and references therein).
Thus, in \eqref{eqn:LRT00}, we replace $\btheta_{r,1}$, $\btheta_{r,0}$, and $\btheta_{s}$ with their 
respective MLEs under each hypothesis. Specifically, denoting by $\widehat{\btheta}_{r,i}$, $i=0,1$, the MLE of
$\btheta_{r,i}$ and by $\widehat{\btheta}_{s,i}$ the MLE of $\btheta_{s}$ under the $H_i$ hypothesis, $i=0,1$,
\eqref{eqn:LRT00} can be recast as
\begin{align}
& E_{\bar{f}}[\log f_{1,\bar{m}}(\bZ; \widehat{\btheta}_{r,1}, \widehat{\btheta}_{s,1}, p_{r,\bar{m}})]\nonumber
\\
& - E_{\bar{f}}[\log f_{0}(\bZ; \widehat{\btheta}_{r,0}, \widehat{\btheta}_{s,0})] \test 0 \nonumber
\\
& \Rightarrow \Lambda_{1}(\bZ;\bar{m}) = E_{\bar{f}}\left[\log \frac{f_{1,\bar{m}}(\bZ; \widehat{\btheta}_{r,1}, 
\widehat{\btheta}_{s,1}, p_{r,\bar{m}})}{f_{0}(\bZ; \widehat{\btheta}_{r,0}, \widehat{\btheta}_{s,0})}\right] \test 0.
\label{eqn:GLRT00}
\end{align}
Now, in place of the expectation with respect to the unknown $\bar{f}$, we use an unbiased estimator 
of $\Lambda_{1}(\bZ;\bar{m})$ (see also equation (48) of \cite{Stoica1}), namely
\be
\label{eqn:GLRT000}
\widehat{\Lambda}_{1}(\bZ;\bar{m}) = \log\frac{f_{1,\bar{m}}(\bZ; \widehat{\btheta}_{r,1}, 
\widehat{\btheta}_{s,1}, p_{r,\bar{m}})} {f_{0}(\bZ; \widehat{\btheta}_{r,0}, \widehat{\btheta}_{s,0})},
\ee
and introduce a threshold to control the $P_{fa}$ yielding
\be
\widehat{\Lambda}_{1}(\bZ;\bar{m})\test\eta.
\label{eqn:GLRT01}
\ee
It is important to underline that the above test is statistically equivalent to GLRT for known $p_{r}$.

The same result can be derived showing that the MLEs minimize the KLD between the EDD and the candidate 
model with respect to the unknown parameters. To this end, let us start from \eqref{eqn:decisionEDD_00}
and minimize both sides with respect to the unknown parameters
\be
\min_{\btheta_{r,0},\btheta_{s}} D(\varXi,g_{0}) \test \min_{\btheta_{r,1},\btheta_{s}} D(\varXi,g_{1,\bar{m}}).
\label{eqn:minKLD_EDD}
\ee
The above problem is equivalent to
\begin{align}
&  \min_{\btheta_{r,0}, \btheta_{s}} \{-E_{\varXi}[\log g_{0}(\bz;\btheta_{r,0}, \btheta_{s})]\} \test \nonumber
\\ &\min_{\btheta_{r,1}, \btheta_{s}} \{-E_{\varXi}[\log g_{1,\bar{m}}(\bz; \btheta_{r,1}, \btheta_{s}, p_{r,\bar{m}})]\}
\end{align}
\begin{align}
& \Rightarrow \underbrace{ \min_{\btheta_{r,0}, \btheta_{s}} \left[-\sum_{k=1}^K \log g_{0}(\bz_k;\btheta_{r,0}, \btheta_{s})\right]}_{P_1} \test \nonumber
\\ &\underbrace{\min_{\btheta_{r,1}, \btheta_{s}} \left[-\sum_{k=1}^K 
\log g_{1,\bar{m}}(\bz_k;\btheta_{r,1}, \btheta_{s}, p_{r,\bar{m}})\right]}_{P_2}.
\label{eqn:minKLD_EDD_01}
\end{align}
Now, note that $P_1$ and $P_2$ are equivalent to
\begin{align}
&-\max_{\btheta_{r,0}, \btheta_{s}} \sum_{k=1}^K \log g_{0}(\bz_k;\btheta_{r,0}, \btheta_{s}),
\\
&-\max_{\btheta_{r,1}, \btheta_{s}} \sum_{k=1}^K \log g_{1,\bar{m}}(\bz_k;\btheta_{r,1}, \btheta_{s},p_{r,\bar{m}}),
\end{align}
respectively, and, hence, the sought minimizers coincides with the MLEs of $\btheta_{r,1}$, $\btheta_{r,0}$,
$\btheta_{s}$ (the latter under each hypothesis), namely
$\widehat{\btheta}_{r,1} = \arg\min_{\btheta_{r,1}} D(\varXi,g_{1,\bar{m}})$, 
$\widehat{\btheta}_{s,1} = \arg\min_{\btheta_{s}} D(\varXi,g_{1,\bar{m}})$,
$\widehat{\btheta}_{r,0} = \arg\min_{\btheta_{r,0}} D(\varXi,g_{0})$, and
$\widehat{\btheta}_{s,0} = \arg\min_{\btheta_{s}} D(\varXi,g_{0})$.
With the above result in mind, \eqref{eqn:minKLD_EDD_01} can be recast as
\begin{align}
& \sum_{k=1}^K \log g_{1,\bar{m}}(\bz_k;\widehat{\btheta}_{r,1},\widehat{\btheta}_{s,1},p_{r,\bar{m}})\nonumber
\\ & -  
\sum_{k=1}^K \log g_{0}(\bz_k;\widehat{\btheta}_{r,0},\widehat{\btheta}_{s,0}) \test 0 \nonumber
\\
\Rightarrow &\log \frac{\ds \prod_{k=1}^K 
g_{1,\bar{m}}(\bz_k; \widehat{\btheta}_{r,1},\widehat{\btheta}_{s,1}, p_{r,\bar{m}})} 
{\ds \prod_{k=1}^K g_{0}(\bz_k; \widehat{\btheta}_{r,0},\widehat{\btheta}_{s,0})} \test 0,
\end{align}
which coincides with \eqref{eqn:GLRT000}.

\subsection{KLIC-based Detectors for Unknown Model and Parameters}
In this last subsection, we deal with the most general case where $p_r$, $\btheta_{r,1}$, $\btheta_{r,0}$, 
and $\btheta_{s}$ are unknown. As stated in Section \ref{Sec:Problem_Formulation}, under this assumption, 
there exist multiple alternative hypotheses depending on the model order $p=p_{r,m}+p_s$.

As possible strategy to select the most likely hypothesis, we might follow the same line of reasoning 
as in the previous subsection replacing $p$ with its MLE. However, if on one hand, this approach (given $p_r$) 
makes sense for $\btheta_{r,1}$, $\btheta_{r,0}$, and $\btheta_{s}$, on the other, when the considered models 
are nested, it fails in the estimation of $p_r$. 
In fact, the log-likelihood function monotonically increases with $p_r$ and, as a consequence, the ML approach 
will always select the maximum possible order leading to an overfitting. Therefore, an alternative approach 
is required in order to find ``good'' approximations of the negative cross entropy which moderate the 
overfitting inclination of the ML approach. To this end, we follow the same line of reasoning used to derive 
the MOS rules as shown in \cite{Stoica1}, where suitable Taylor series expansions of the cross entropy 
(used in \eqref{eqn:LRT00}) are exploited. Then, the dependence on $p$ is removed by optimizing these 
expansions over the latter, which is tantamount to minimizing approximations of the KLD between $\bar{f}$ 
and $f_{1,m}$ with respect to the unknown parameters. 

As for the null hypothesis, it is independent of $p$ and, hence, we can exploit previously devised estimators.
Specifically, we replace $E_{\bar{f}}[\log f_{0}(\bZ; {\btheta}_{r,0}, {\btheta}_{s})]$
with the same estimator as in the previous subsection, namely $\log f_{0}(\bZ; \widehat{\btheta}_{r,0},
\widehat{\btheta}_{s,0})$.

Now, let us define $\cI_m(\btheta_{r,1}, \btheta_{s}) = -h(\bar{f},f_{1,m})$ and denote by $\widehat{\cI}_m$
an estimator of the former. Following the lead of \cite{Stoica1}, several
alternatives are possible for $\widehat{\cI}_m$, namely
\begin{itemize}
\item through equations (49)-(53) of \cite{Stoica1}, we come up with
\be
\widehat{\cI}_m=\log f_{1,m}(\bZ; \widehat{\btheta}_{r,1}, \widehat{\btheta}_{s,1}, p_{r,m}) 
- \frac{ p_{r,m}+p_s }{2};
\label{eqn:NN_based}
\ee
\item through equations (57)-(62) of \cite{Stoica1}, we obtain
\be
\widehat{\cI}_m = \log f_{1,m}(\bZ; \widehat{\btheta}_{r,1}, \widehat{\btheta}_{s,1}, p_{r,m}) - (p_{r,m}+p_s);
\label{eqn:AIC_based}
\ee
\item through equations (59)-(60) and (73) of \cite{Stoica1}, we have that
\be
\begin{split}
\widehat{\cI}_m = &\log f_{1,m}(\bZ; \widehat{\btheta}_{r,1}, \widehat{\btheta}_{s,1}, p_{r,m}) 
- \frac{1+\rho}{2}(p_{r,m}+p_s),\\
& \rho>1;
\end{split}
\label{eqn:GIC_based}
\ee
\item through equations (79)-(86) of \cite{Stoica1}, we get
\be
\widehat{\cI}_m = \log f_{1,m}(\bZ; \widehat{\btheta}_{r,1}, \widehat{\btheta}_{s,1}, p_{r,m}) 
- \frac{p_{r,m}+p_s}{2}\log(T) + C,
\label{eqn:BIC_based}
\ee
where $C$ is a constant. Note that the above expression results from an asymptotic approximation for 
sufficiently large values of $T$ of the more general form \cite{Stoica1}
\be
\widehat{\cI}_m  = \log f_{1,m}(\bZ; \widehat{\btheta}_{r,1}, \widehat{\btheta}_{s,1}, p_{r,m}) - \frac{1}{2} \log\det\widehat{\bJ} + C,
\label{eqn:BIC_full}
\ee
where
\be
\widehat{\bJ}=
\left. \left[ -\frac{\partial^2}
{\partial\btheta\partial\btheta^T} \log f_{1,m}(\bZ; {\btheta}_{r,1}, 
{\btheta}_{s,1}, p_{r,m}) \right]\right|_{{\btheta}_{r,1}=\widehat{\btheta}_{r,1} \atop 
{\btheta}_{s,1}=\widehat{\btheta}_{s,1}}
\ee
It is clear that other approximations are possible by considering the asymptotic behavior with respect to different
parameters.
\end{itemize}
Finally, an estimate of $m$ can be obtained as
\be
\widehat{m}=\argmax_{m\in\{1,\ldots,M\}} \widehat{\cI}_m
\ee
with $p_{r,\widehat{m}}$ the corresponding estimate of $p_r$, and we can replace each addendum 
of \eqref{eqn:LRT00} with the respective approximation to come up with the following adaptive rule
\be
\widehat{\cI}_{\widehat{m}}  - \log f(\bZ; \widehat{\btheta}_{r,0}, \widehat{\btheta}_{s,0}) \test 0,
\ee
which, introducing the threshold to control the $P_{fa}$, can be recast as
\be
\label{eqn:decisionCriterion}
\dmax_{m\in\{1,\ldots,M\}}\left\{\widehat{\Lambda}_{1}(\bZ;m) - h(m)\right\}\test\eta,
\ee
where
\be
h(m)=
\begin{cases}
(p_{r,m}+p_s)/2, & \mbox{(a)}
\\
(p_{r,m}+p_s), & \mbox{(b)}
\\
\ds \frac{1+\rho}{2}(p_{r,m}+p_s), \ \rho>1,  & \mbox{(c)}
\\
\ds\frac{(p_{r,m}+p_s)}{2}\log(T), & \mbox{(d)}
\end{cases}
\label{eqn:penaltyMOSDet}
\ee
is a penalty term.

Before concluding this section, an important remark is in order. Specifically, let us focus on
\eqref{eqn:decisionCriterion} and observe that $\widehat{\Lambda}_{1}(\bZ;m)$ is the logarithm of the generalized likelihood ratio (GLR) assuming that the model order is $p_{r,m}+p_s$. Thus, for sufficiently large $T$, it would be 
reasonable to consider alternative decision statistics which share the same asymptotic behavior as the GLR. 
For instance, the GLR can be replaced by the decision statistics of the Rao or Wald test \cite{KayBook}. Summarizing,
decision rule \eqref{eqn:decisionCriterion} represents a starting point for the design of detection
architectures dealing with multiple alternative hypotheses. Another modification of \eqref{eqn:decisionCriterion}
may consist in replacing $\widehat{\Lambda}_{1}(\bZ;m)$ with the decision statistics derived applying
{\em ad hoc} GLRT-based design procedures \cite{robey1992cfar,Yuri01,HaoSP_HE,BOR-Morgan,8781902}. 
Finally, in the Appendix, we show that \eqref{eqn:decisionCriterion} can be also
considered as the result of the joint application of the Bayesian and ML framework after assigning a suitable 
prior to the number of parameters.

\section{Invariance Issues and CFAR Property}
\label{sec_invariance}
The design of architectures capable of guaranteeing the CFAR property is an issue of primary concern in radar
(as well as in other application fields) since it allows for reliable target detection also in those
situations where the interference (or unwanted components) spectral properties are unknown or highly variable. 
As a matter of fact, controlling and keeping low the number of false alarms is a precaution
aimed at avoiding the disastrous effects that the latter may ensue. Thus, system engineers set
detection thresholds in order to guarantee very small values for the $P_{fa}$ \cite{farina1986review, rohling1983radar, barkat1989cfar, conte2002cfar, roman2000parametric, gini2002covariance}. 
Unfortunately, the CFAR property is not granted by a generic detection scheme and, hence, before
claiming the CFARness for a given receiver, it must be proved that its decision statistic does not depend on
the interference parameters under the null hypothesis. However, there exist some cases where the decision 
statistic is functionally invariant to those transformations that modify the nuisance parameters, 
which do not enter the decision process, and, at the same time, preserve the hypothesis testing 
problem. As a consequence, under the null hypothesis, such statistic can be expressed 
as a function of random quantities whose distribution is independent of the nuisance parameters ensuring 
the CFAR property \cite{kelly1986adaptive,robey1992cfar,BOR-Morgan}.

Therefore, the above remarks suggest that it may be reasonable to look for decision rules that are invariant 
to nuisance parameters in the sense described above. To this end, we can invoke the {\em Principle of Invariance} 
\cite{LehmannBook,scharf1991statistical}, whose key idea consists in finding a specific group of transformations, 
formed by a set $\cG$ equipped with a binary composition operation $\circ$, that leaves unaltered the formal 
structure of the hypothesis testing problem (also inducing a group of 
transformations in the parameter space) and the family of distribution under each hypothesis. Then, a 
data compression can be accomplished by resorting to the so-called {\em maximal invariant statistics} which organize 
data into distinguishable classes of equivalence with respect to the group of transformations wherein such 
statistics are constant. Now, given a group of transformations, every invariant test can be written in terms 
of the maximal invariant statistic whose distribution may depend only on a function of the parameters (induced
maximal invariant). If the latter exists and is constant over $\Theta_0$, then any invariant test guarantees 
the CFAR property with respect to the unknown nuisance parameters.

In what follows, we provide two propositions which allows to state when \eqref{eqn:decisionCriterion} 
is invariant with respect to a given group of transformations and, possibly, enjoys the CFAR property.

\medskip

\begin{prop}\label{PropCFAR}
Let us assume that there exists a group of transformations $\cL=(\cG,\circ)$, which acts on data through $l(\cdot)$
and leaves both the following binary hypothesis testing problems
\be
\cP_m: \left\{
\begin{array}{ll}
H_0 : & \ds\bZ \sim f_{0}(\bZ; \btheta_{r,0}, \btheta_{s}),
\\
H_{{1,m}} : & \bZ \sim f_{1,m}(\bZ; \btheta_{r,1}, \btheta_{s}, p_{r,m}),
\end{array}
\right.
\ee
for all $m\in\{1,\ldots,M\}$ and the data distribution family unaltered, then the decision statistic 
\eqref{eqn:decisionCriterion} is invariant with respect to $\cL$.
\end{prop}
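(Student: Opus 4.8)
The plan is to show that each ingredient of the decision statistic in \eqref{eqn:decisionCriterion} is individually invariant under the group $\cL$, so that their combination (a maximum over $m$ of differences) is invariant as well. The statistic is
\be
\dmax_{m\in\{1,\ldots,M\}}\left\{\widehat{\Lambda}_{1}(\bZ;m) - h(m)\right\},\nonumber
\ee
where $\widehat{\Lambda}_{1}(\bZ;m)=\log\frac{f_{1,m}(\bZ;\widehat{\btheta}_{r,1},\widehat{\btheta}_{s,1},p_{r,m})}{f_{0}(\bZ;\widehat{\btheta}_{r,0},\widehat{\btheta}_{s,0})}$ is a compressed (maximized) log-likelihood ratio and $h(m)$ is the penalty from \eqref{eqn:penaltyMOSDet}. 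First I would observe that the penalty $h(m)$ depends only on the model-order counts $p_{r,m}$ and $p_s$ and on $T$; it carries no dependence on $\bZ$ whatsoever, hence it is trivially unchanged when $\bZ$ is replaced by $l(\bZ)$. The entire burden therefore falls on proving that each $\widehat{\Lambda}_{1}(\bZ;m)$ is invariant, i.e.\ $\widehat{\Lambda}_{1}(l(\bZ);m)=\widehat{\Lambda}_{1}(\bZ;m)$ for every $g\in\cG$ and every $m$.

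For a fixed $m$, invariance of $\widehat{\Lambda}_{1}(\bZ;m)$ is exactly the statement that the GLRT for the binary problem $\cP_m$ is invariant under $\cL$. This is the classical consequence of the hypothesis that $\cL$ leaves both the testing problem $\cP_m$ and the distribution family unaltered. The key step is to exploit the equivariance of the maximum-likelihood estimators under a group action that induces a transformation $\bar{g}$ on the parameter space. Concretely, because the action permutes the densities within each hypothesis class, the supremum of the likelihood over each class is preserved: one has
\begin{align}
\sup_{\btheta_{r,1},\btheta_s} f_{1,m}(l(\bZ);\btheta_{r,1},\btheta_s,p_{r,m}) &= |J_g|\,\sup_{\btheta_{r,1},\btheta_s} f_{1,m}(\bZ;\btheta_{r,1},\btheta_s,p_{r,m}),\nonumber\\
\sup_{\btheta_{r,0},\btheta_s} f_{0}(l(\bZ);\btheta_{r,0},\btheta_s) &= |J_g|\,\sup_{\btheta_{r,0},\btheta_s} f_{0}(\bZ;\btheta_{r,0},\btheta_s),\nonumber
\end{align}
where $|J_g|$ is the Jacobian of $l$ evaluated along the action of $g$. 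Since the numerator and denominator of $\widehat{\Lambda}_{1}$ pick up the \emph{same} Jacobian factor, it cancels in the ratio, and therefore $\widehat{\Lambda}_{1}(l(\bZ);m)=\widehat{\Lambda}_{1}(\bZ;m)$. The requirement that the family of distributions be unaltered is precisely what guarantees that the suprema over $\Theta_r^{m}\times\Theta_s$ and over $\Theta_r^{0}\times\Theta_s$ are each mapped onto themselves (up to the Jacobian), which is the crux of the argument.

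With invariance of every term $\widehat{\Lambda}_{1}(\bZ;m)-h(m)$ established, the final step is immediate: the maximum over the finite index set $m\in\{1,\ldots,M\}$ of invariant quantities is invariant, because applying $\max$ to a set of unchanged values returns an unchanged value. Hence the whole left-hand side of \eqref{eqn:decisionCriterion} satisfies the invariance property with respect to $\cL$, which is the claim. I expect the \textbf{main obstacle} to be the careful bookkeeping of the Jacobian factor $|J_g|$ and verifying its cancellation: one must check that the group acts by a \emph{common} density transformation under both $H_0$ and all $H_{1,m}$ (so that the same $|J_g|$ appears in numerator and denominator), rather than merely assuming it. This hinges on the hypothesis that a \emph{single} group $\cL$ leaves \emph{all} the problems $\cP_m$ invariant simultaneously, which couples the null and every alternative through a shared nuisance-parameter action $\btheta_s$; making this simultaneity explicit, and confirming that the estimator equivariance holds uniformly across the hypotheses, is the delicate part of the proof.
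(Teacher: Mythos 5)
Your proof is correct and takes essentially the same approach as the paper's: the paper likewise reduces the claim to the invariance of each binary GLR $\widehat{\Lambda}_{1}(\bZ;m)$ for the problem $\cP_m$ (citing Lehmann and Eaton for that classical fact), treats the penalty $h(m)$ as data-free, and concludes by taking the maximum over the finite index set. The only difference is that you expand the cited lemma into an explicit Jacobian-cancellation argument, which the paper delegates to the references.
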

\begin{proof}
Since each problem $\cP_m$ is invariant with respect to $\cL$ by definition, the GLR for the $m$th testing problem, 
namely $\widehat{\Lambda}_{1}(\bZ;m)$, is invariant to the same transformation group as shown in
\cite{LehmannBook,eaton1983multivariate}, namely
\be
\widehat{\Lambda}_{1}(l(\bZ);m) = \widehat{\Lambda}_{1}(\bZ;m),\quad \forall m\in\{1,\ldots,M\}.
\ee
As a consequence, we obtain that
\be
\begin{split}
& \dmax_{m\in\{1,\ldots,M\}}\left\{\widehat{\Lambda}_{1}(l(\bZ);m) - h(m)\right\}\\
& = \dmax_{m\in\{1,\ldots,M\}}\left\{\widehat{\Lambda}_{1}(\bZ;m) - h(m)\right\}.
\end{split}
\ee
The last equality establishes the invariance of \eqref{eqn:decisionCriterion} with respect to $\cL$ and 
concludes the proof.
\end{proof}

\medskip

From a practical point of view, if, for all $m\in\{1,\ldots,M\}$, the generic $\cP_m$ is invariant with  respect to a subgroup $\cL_m$ of a more general group, then it is possible to obtain $\cL$ as the intersection of the subgroups $\cL_1,\ldots,\cL_M$ \cite{robinson1996course}. 
Moreover, observe that if \eqref{eqn:decisionCriterion} is invariant with respect to $\cL$ then, by {\em Theorem 6.2.1} of \cite{LehmannBook}, its decision statistic can be expressed as a function of the previously described maximal invariant statistics. Next, under the hypothesis of {\em Theorem 6.3.2} of \cite{LehmannBook}, the distribution of a maximal invariant statistic depends on the induced maximal invariant which is denoted by $\xi(\btheta)$ and the following proposition holds true.

\medskip

\begin{prop}
Let us assume that {\em \bf Proposition \ref{PropCFAR}} is valid and that 
\be
\forall \btheta\in\Theta_0: \ \xi(\btheta)=c,
\ee
with $c\in\R$, then \eqref{eqn:decisionCriterion} ensures the CFAR property. 
\end{prop}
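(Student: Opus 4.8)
The plan is to translate the CFAR requirement into a statement about the null distribution of the decision statistic and then chain together the invariance already secured by Proposition~\ref{PropCFAR} with the two cited results of \cite{LehmannBook}. Recall that the CFAR property amounts to the $P_{fa}$, i.e., the probability under $H_0$ that the left-hand side of \eqref{eqn:decisionCriterion} exceeds $\eta$, being the same for every admissible value of the nuisance parameters, that is, for every $\btheta\in\Theta_0$. Denoting the decision statistic by
\[
t(\bZ)=\max_{m\in\{1,\ldots,M\}}\left\{\widehat{\Lambda}_{1}(\bZ;m)-h(m)\right\},
\]
it therefore suffices to prove that the distribution of $t(\bZ)$ under $H_0$ does not depend on $\btheta\in\Theta_0$.

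First I would invoke Proposition~\ref{PropCFAR}, whose assumptions are in force by hypothesis, to conclude that $t(\bZ)$ is invariant with respect to the group $\cL$. By \emph{Theorem 6.2.1} of \cite{LehmannBook}, every statistic invariant under $\cL$ can be written as a measurable function of a maximal invariant statistic $\bt_{\mathrm{MI}}(\bZ)$ associated with the group; hence $t(\bZ)=\psi(\bt_{\mathrm{MI}}(\bZ))$ for some fixed map $\psi$. In particular, the law of $t(\bZ)$ is entirely governed by the law of $\bt_{\mathrm{MI}}(\bZ)$.

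Next I would appeal to \emph{Theorem 6.3.2} of \cite{LehmannBook}, under whose regularity hypotheses the distribution of the maximal invariant $\bt_{\mathrm{MI}}(\bZ)$ depends on $\btheta$ only through the induced maximal invariant $\xi(\btheta)$. Since the statement assumes $\xi(\btheta)=c$ for all $\btheta\in\Theta_0$, the distribution of $\bt_{\mathrm{MI}}(\bZ)$ under $H_0$ is one and the same for every $\btheta\in\Theta_0$. Because $t(\bZ)=\psi(\bt_{\mathrm{MI}}(\bZ))$ is a fixed function of $\bt_{\mathrm{MI}}(\bZ)$, its null distribution inherits this property, so that
\[
P_{fa}=P\left(t(\bZ)>\eta\,|\,H_0\right)
\]
is independent of $\btheta\in\Theta_0$, which is exactly the CFAR property and closes the argument.

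The hard part will not be the logical chain, which is short once the machinery is in place, but rather the verification that the structural prerequisites are legitimately available: namely that the induced maximal invariant $\xi(\cdot)$ is well defined and that the group action is regular enough for the hypotheses of \emph{Theorem 6.3.2} of \cite{LehmannBook} to apply, so that the distributional reduction to $\xi(\btheta)$ is valid. These are precisely the conditions the proposition takes as given; granting them, the constancy of $\xi(\cdot)$ over $\Theta_0$ forces the null distribution of $t(\bZ)$ to be parameter-free and the conclusion follows with no further computation.
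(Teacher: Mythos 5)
Your proposal is correct and follows essentially the same route as the paper: invariance from Proposition~\ref{PropCFAR}, reduction to a function of the maximal invariant via \emph{Theorem 6.2.1} of \cite{LehmannBook}, and dependence of its distribution only on the induced maximal invariant $\xi(\btheta)$ via \emph{Theorem 6.3.2}, which is constant over $\Theta_0$ by hypothesis. The paper compresses this chain into a single sentence; you have merely spelled out the same steps explicitly.
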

\begin{proof}
Since \eqref{eqn:decisionCriterion} is invariant, its decision statistic is a function of the maximal invariant
statistic, whose distribution does not depend on the specific value of the parameter vector under $H_0$ but only 
on $c$.
\end{proof}

\medskip

\section{Illustrative Examples}
\label{Sec:Applications}
In this section, we show the effectiveness of the newly proposed approach in three operating scenarios
related to radar systems. For each scenario, we compare the proposed solutions with Two-Stage (TS)
architectures consisting of an estimation stage devoted to the model order selection and a detection 
stage which exploits the model order estimate provided by the first stage\footnote{Some of these competitors
can be found in the open literature.}.

In what follows, the performance of the proposed approach is investigated in terms of the Probability of 
Correct Detection, $P_{d|m}$ say, defined as\footnote{Note that, focusing on problem (\ref{eqn:detectionProblem}) 
and assuming that $p_r$ 
is unknown, the $P_{d|m}$ can be recast as $P(H_1, \hat{m}=m | H_m)$ with $m$ the actual hypothesis order.}  
$P_{d|m} = P(H_m|H_m)$
as well as either
the classification histograms (a point better explained in the next
subsections) or the Root Mean Square Errors (RMSEs) of the interest parameters. 
Moreover, the above performance metrics are computed resorting to standard Monte Carlo counting 
techniques over $10^4$ and $100/P_{fa}$ independent trials to estimate the $P_{d|m}$ and the thresholds 
to guarantee a preassigned $P_{fa}=P(H_m, m>0|H_0)$, respectively. Finally,
all the numerical examples assume $K=32$, $P_{fa}=10^{-4}$, a thermal noise power $\sigma^2_n=1$, and
a radar system equipped with $N=16$ spatial channels
leading to the following expression for the nominal steering vector \cite{BOR-Morgan}
$\bv(\theta) = \frac{1}{4} \left[1, e^{j\pi \sin\theta}, \ldots, e^{j\pi 15 \sin\theta} \right]^T$,
where $\theta$ is the steering angle measured with respect to the antenna boresight.

\subsection{Multiple Noise-like Jammers Detection}
Let us assume that the considered radar system is listening to the environment in the presence of an unknown 
number of NLJs. 
Samples of interest are organized into $N$-dimensional vectors denoted by $\bz_k$, $k=1,\ldots,K$. 
Note that, in this case, data under test are not affected by clutter since they are 
collected without transmitting any signal \cite{DoppioTraining,carotenutoNLJ}. 
Thus, the detection problem at hand can be 
formulated in terms of the following multiple hypothesis test
\be
\label{probfornoisejam}
\left\{
\begin{array}{lll}
H_0: & \bz_k \sim \cC\cN_N(\bzero, \sigma^2_n\bI), & \! k = 1, \ldots, K, 
\\
H_{m}: & \bz_k \sim \cC\cN_N(\bzero, \sigma^2_n\bI + \bM_J(m)), & \! k = 1, \ldots, K, 
\\
& & \! m=1,\ldots,N_J,
\end{array}
\right.
\ee
where $\bM_J(m)\in\C^{N\times N}$ is the unknown positive semidefinite matrix representing the jammer component
whose rank is denoted by $m$ and is related to the actual 
number of directions from which interfering signals hit the system, $N_J\leq N$ is the maximum allowable number 
of such directions, and $\sigma^2_n \bI$ with unknown $\sigma^2_n>0$ is the noise component.

In order to apply \eqref{eqn:decisionCriterion}, we need to compute the logarithm of the 
GLR for \eqref{probfornoisejam} and the penalty term. 
The former is computed by following the lead of \cite{carotenutoNLJ},
where the role of $r$ is played by $m$, to come up with
\be
\begin{split}
&\widehat{\Lambda}_{1}(\bZ;m)
= \log \left\{ \frac{\displaystyle{\left[\frac{1}{K(N-m)} \sum_{i=m+1}^{N} \gamma_i \right]^{-K(N-m)}}}{\displaystyle{\prod_{i=1}^m \left(\frac{\gamma_i}{K}\right)^{K} \left[\frac{1}{NK} \sum_{i=1}^{N} \gamma_i \right]^{-NK}}} \right\},\\
&\quad m = 1,\ldots, N_J.
\end{split}
\ee
where $\bZ=[\bz_1,\ldots,\bz_K]$ and $\gamma_1\geq\ldots \geq\gamma_N\geq 0$ are the eigenvalues of $\bZ\bZ^\dag$.
The penalty term can be written by noticing that the number of unknown parameters is
$p=m(2N-m)+1$ while $T=2KN$. Notice that $T$ depends on $N$ and, hence, we exploits \eqref{eqn:BIC_full}
to find another suitable asymptotic approximation in place of \eqref{eqn:penaltyMOSDet}(d). Specifically, we assume that only $K$
goes to infinite obtaining $h(m)=\frac{p_{r,m}+p_s}{2}\log(K)$ (in the following we refer to this penalty term
as modified \eqref{eqn:penaltyMOSDet}(d)).

The considered simulation scenario comprises three noise jammers with different Angle Of Arrivals (AOA), 
viz., $10^{\circ}$, $20^{\circ}$, and $-15^{\circ}$, 
respectively, and sharing the same power. Therefore, the jammer component of the data covariance matrix 
can be written as $\bM_J = \text{JNR}\sum_{i=1}^3 \bv(\theta_i)\bv^{\dag}(\theta_i)$,
where $\theta_i$ is the AOA of the $i$th noise-like jammer and $\text{JNR}$ the Jammer to Noise Ratio.
Finally, we set $N_J=6$. As already stated, for 
comparison purposes, we show the performance of analogous TS architectures where the MOS rule at
the first stage shares up to a factor $2$ the same penalty term as \eqref{eqn:decisionCriterion} whereas 
the second stage is given by the GLRT for known $m$ \cite{carotenutoNLJ}.

Figure \ref{fignoise}, where we plot the $P_{d|3}$ versus JNR, shows that the curves related 
to the new decision schemes are perfectly overlapped to those 
of the TS architectures. The worst performance is provided by 
\eqref{eqn:decisionCriterion} coupled with \eqref{eqn:penaltyMOSDet}(a) due to the nonnegligible attitude 
of the latter to overestimate the hypothesis order as shown in the next figure. Such behavior, although
in a milder form, can also be noted for \eqref{eqn:decisionCriterion} coupled with \eqref{eqn:penaltyMOSDet}(b) whose 
detection curve exhibits a floor. On the other hand, \eqref{eqn:decisionCriterion} coupled with
\eqref{eqn:penaltyMOSDet}(c) and \eqref{eqn:decisionCriterion} coupled with modified \eqref{eqn:penaltyMOSDet}(d) provide satisfactory performance.

Finally, in Figure \ref{fignoisehist}, we plot the classification histograms, namely the probability to select 
$H_m$, $m=1,2,3,4,5,6$, under $H_n$, $n=1,2,3$ and assuming $\text{JNR}=10$ dB. It clearly turns out
the significant overestimation inclination of (\ref{eqn:decisionCriterion}) coupled with \eqref{eqn:penaltyMOSDet}(a).
The remaining decision schemes exhibit a percentage of correct classification 
very close to $100$\% except for \eqref{eqn:decisionCriterion} coupled with \eqref{eqn:penaltyMOSDet}(b)
whose percentages are around $90$ \%.

Other results not reported here for brevity highlight that \eqref{eqn:decisionCriterion} coupled with 
\eqref{eqn:penaltyMOSDet}(c) and \eqref{eqn:decisionCriterion} coupled with modified \eqref{eqn:penaltyMOSDet}(d) maintain good 
detection and classification 
performance for different parameter settings whereas the behavior of the other competitors is more sensitive 
to the specific parameter values experiencing in some situations a significant performance degradation.

\begin{figure}
\centering
\includegraphics[width=0.45\textwidth]{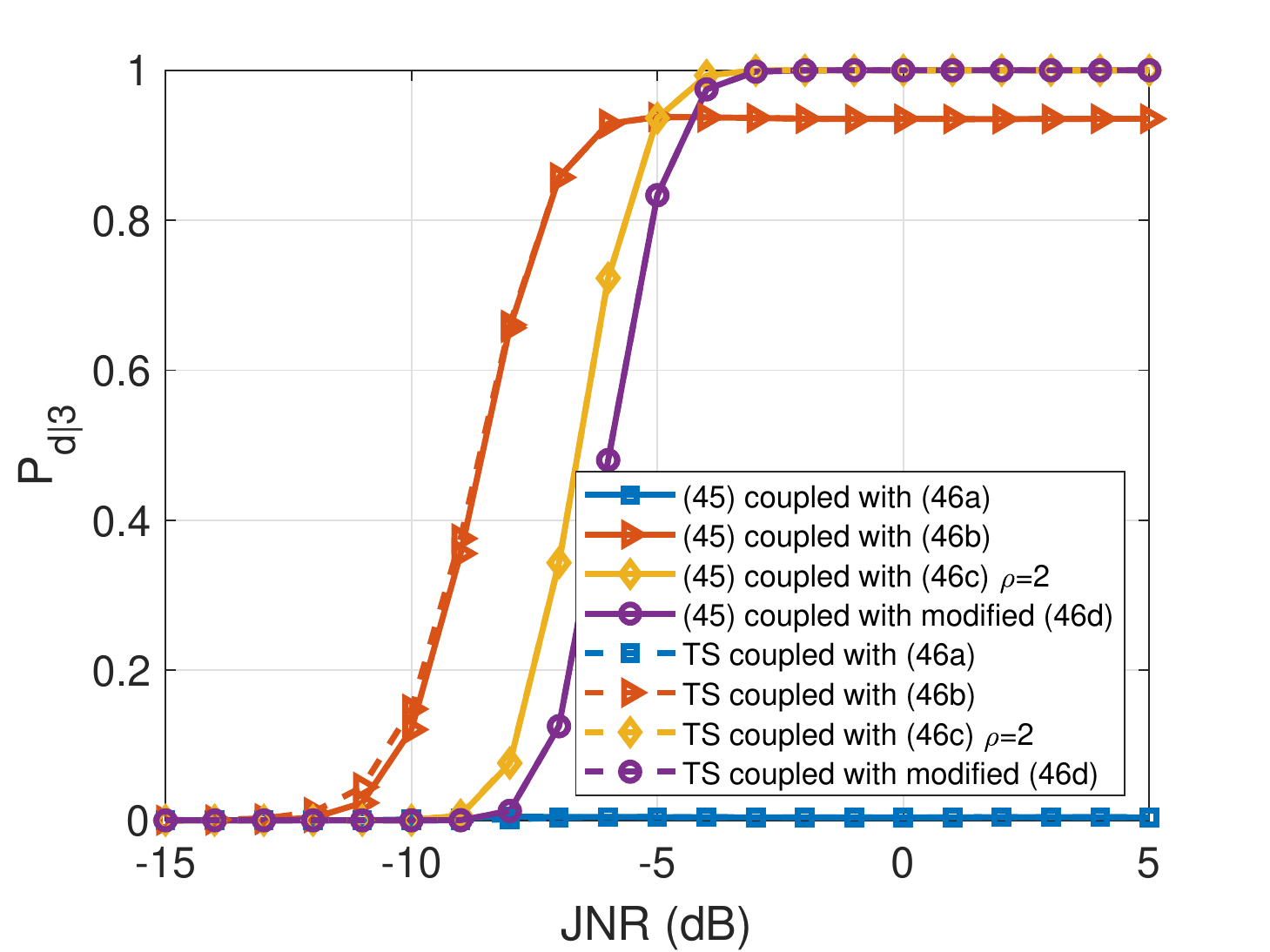}
\caption{Multiple Noise-Like Jammers Detection. $P_{d|3}$ versus JNR
for the considered architectures assuming $N=16$, $K=32$.}
\label{fignoise}
\end{figure}

\begin{figure}
\centering
\includegraphics[width=0.45\textwidth]{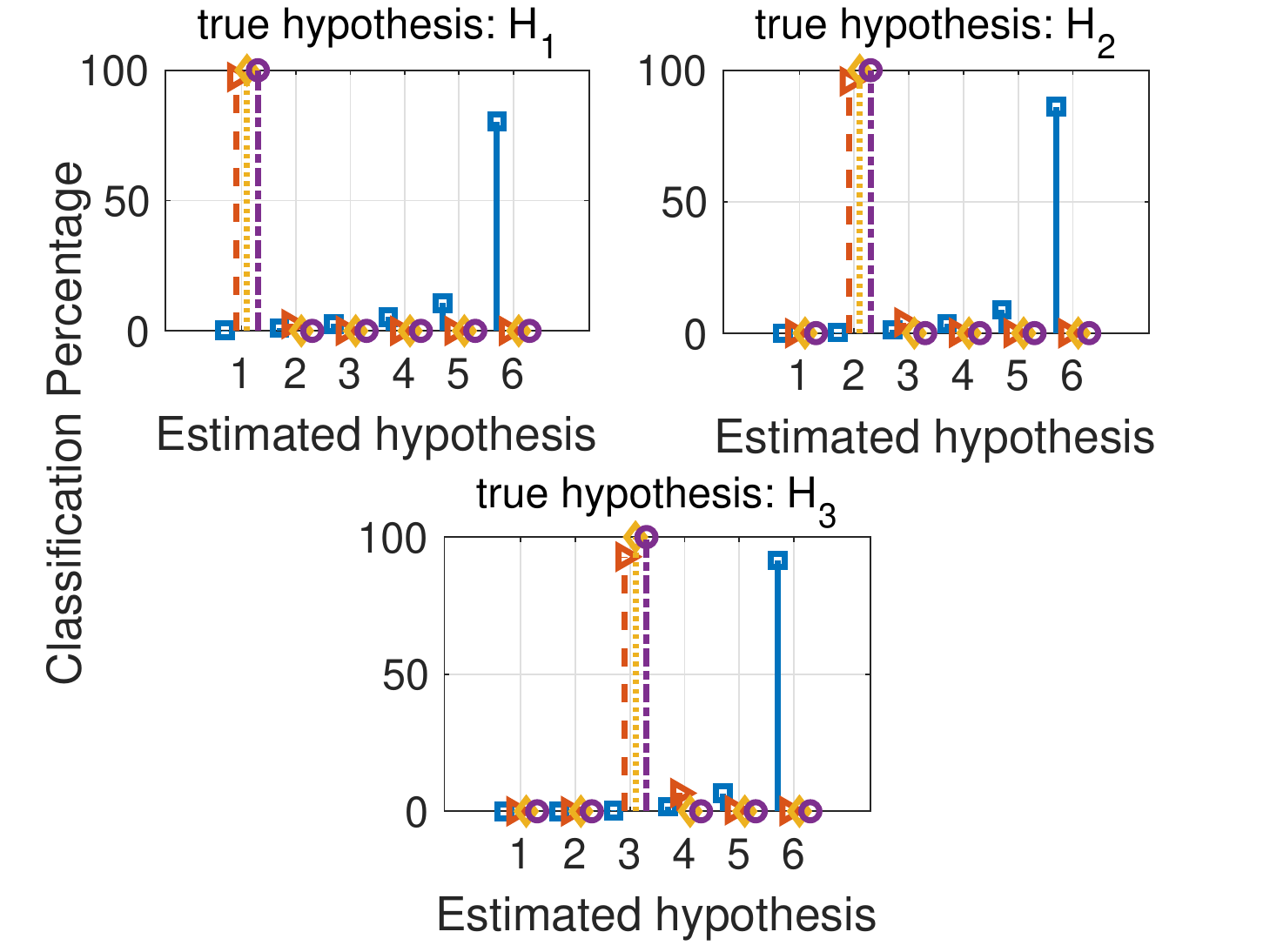}
\caption{Multiple Noise-Like Jammers Detection. Classification histograms for each hypothesis 
($\boxempty$-marked (\ref{eqn:decisionCriterion}) coupled with (\ref{eqn:penaltyMOSDet})(a), $\triangleright$-marked 
(\ref{eqn:decisionCriterion}) coupled with (\ref{eqn:penaltyMOSDet})(b), $\diamond$-marked (\ref{eqn:decisionCriterion}) coupled with (\ref{eqn:penaltyMOSDet})(c), $\rho=2$, 
$\circ$-marked (\ref{eqn:decisionCriterion}) coupled with modified (\ref{eqn:penaltyMOSDet})(d)) assuming $N=16$, $K=32$, and $\text{JNR=10}$ dB.}
\label{fignoisehist}
\end{figure}

\subsection{Detection in the presence of Coherent Jammers}
This second illustrative example concerns a scenario where the target echoes compete against a possible 
coherent jammer. Denoting by $\bz\in\C^{N\times 1}$ the cell under test and by $\bz_k$, $k=1,\ldots,K$, the training data set used to estimate the unknown Interference Covariance Matrix (ICM) \cite{BOR-Morgan}, 
the problem at hand can be recast as
\be
\left\{
\begin{array}{lll}
H_0: & \bz=\bn, & \! \bz_k=\bn_k, \ k = 1, \ldots, K, 
\\
H_{1}: & \bz=\bq + \bn, & \! \bz_k=\bn_k, \ k = 1, \ldots, K, 
\\
H_{2}: & \bz=\alpha\bv(\theta_T) + \bn, & \! \bz_k=\bn_k, \ k = 1, \ldots, K, 
\\
H_{3}: & \bz=\alpha\bv(\theta_T) + \bq + \bn, & \! \bz_k=\bn_k, \ k = 1, \ldots, K, 
\end{array}
\right.
\label{eqn:detectionProblemCoherentJammers}
\ee
where $\theta_T$ is the AOA of the target, $\bn,\bn_k\sim\cC\cN_N(\bzero,\bM)$, $k=1,\ldots,K$, are 
statistically independent, $\bM\in\C^{N\times N}$
is the unknown ICM, $\bq\in\C^{N\times 1}$ is the unknown spatial signature of the coherent jammer, and 
$\alpha\in\C$ is an unknown deterministic factor accounting for target and channel propagation effects.
Hereafter, for simplicity, we denote by $\bv$ the nominal steering vector of the target, namely $\bv(\theta_T)$.

In what follows, we exploit the subspace 
paradigm to model the coherent interference \cite{DD}, namely, $\bq=\bJ\ba$, where $\bJ\in\C^{N\times q}$ is 
a known full-column rank matrix whose columns span the jammer subspace\footnote{A priori information about $\bJ$ 
can be gathered by exploiting an Electronic Support Measure system that provides a rough estimate of 
the coherent jammer AOA.} and are linearly independent of $\bv$, while $\boa\in\C^{q\times 1}$
is the vector of the jammer coordinates. 

Now, using the results contained in \cite{kelly1986adaptive,DD,SD}, it is possible to show that
\begin{align}
&\widehat{\Lambda}_1(\bZ,1)= (K+1)\bigg\{\log(1+\bz^{\dag}\bS^{-1}\bz)\\ \nonumber
&-\log\left[1 + \bz^{\dag}\bS^{-1}\bz - \bz^{\dag}\bS^{-1}\bJ\left(\bJ^{\dag}\bS^{-1}\bJ\right)^{-1}\bJ^{\dag}\bS^{-1}\bz\right]\bigg\},
\\
&\widehat{\Lambda}_1(\bZ,2)= (K+1)\bigg[\log(1+\bz^{\dag}\bS^{-1}\bz)\\ \nonumber
&-\log\left(1 + \bz^{\dag}\bS^{-1}\bz - \frac{\bz^{\dag}\bS^{-1}\bv\bv^{\dag}\bS^{-1}\bz}{\bv^{\dag}\bS^{-1}\bv}\right)\bigg],
\\
&\widehat{\Lambda}_1(\bZ,3) =\\ \nonumber
& (K+1)\left[\log(1+\bz^{\dag}\bS^{-1}\bz)-\log\left(1 + \tilde{\bz}_S^{\dag}\tilde{\bP}_{\tilde{\bJ}_S}^{\perp}\tilde{\bz}_S\right)\right],
\end{align}
where $\bS=\sum_{k=1}^K\bz_k\bz_k^\dag$, $\tilde{\bP}_{\tilde{\bJ}_S}^{\perp} = \bI-\tilde{\bJ}_S\left(\tilde{\bJ}_S^{\dag}\tilde{\bJ}_S\right)^{-1}\tilde{\bJ}_S^{\dag}$, with $\tilde{\bJ}_S = \left(\bP_{\bS^{-1/2}\bv}^{\perp}\right)^{1/2}\bJ_S$, $\bJ_S = \bS^{-1/2}\bJ$, and
$$
\bP_{\bS^{-1/2}\bv}^{\perp} = \bI - \frac{\bS^{-1/2}\bv\bv^{\dag}\bS^{-1/2}}{\bv^{\dag}\bS^{-1}\bv^{\dag}},
$$
having set $\tilde{\bz}_S = \left(\bP_{\bS^{-1/2}\bv}^{\perp}\right)^{1/2}\bz_S$ and $\bz_S = \bS^{-1/2}\bz$.
The number of unknown parameters is $2q+N^2$, $2+N^2$, and $2+2q+N^2$ under $H_1$, $H_2$, and $H_3$, respectively
whereas $T=2(K+1)N$. Also in this case, \eqref{eqn:penaltyMOSDet}(d) is replaced by 
the asymptotic approximation of \eqref{eqn:BIC_full} for $K\rightarrow +\infty$ whose expression
is the same as in the previous subsection.

The probability of correct detection is estimated assuming that $H_3$ holds in a scenario where 
the actual AOAs of the coherent jammer and target are $40^\circ$ and $0^\circ$, respectively.
For simplicity, the subspace of the coherent interferer is set through the following matrix 
$\bJ=[\bv(\theta_{J,1}),\bv(\theta_{J,2}),\bv(\theta_{J,3})]$,
where $\theta_{J,1} = 35^{\circ}$, $\theta_{J,2} = 40^{\circ}$, and $\theta_{J,3} = 45^{\circ}$. 
The Jammer-to-Clutter plus Noise Ratio (JCNR) is defined as
$\text{JCNR} = \bv(\theta_{J,2})^\dag \bM^{-1} \bv(\theta_{J,2})$. 
Finally, we assume that the $(n,m)$th entry of the ICM is given by
\begin{equation}\label{eq:expcov}
\bM(n,m) = \sigma_n^2 + \sigma_c^2 \rho_c^{\left|n-m\right|},
\end{equation}
with $\rho_c=0.95$ the one-lag correlation coefficient and $\sigma^2_c$ the clutter power.

Figure \ref{figcoherent1} shows the $P_{d|3}$ versus JCNR assuming $\text{SNR} = 20$ dB and CNR$=20$ dB. 
The curves related to the TS counterparts, where the second stage is the GLRT corresponding
to the hypothesis selected by the first stage, are also reported. Also in this case, the curves
for the proposed architectures and their analogous TS counterparts are perfectly
overlapped. The floor observed at low JCNR values is due to the presence of strong useful signal components
which increase the decision statistic value. The figure draws a ranking where the first position
is occupied by (\ref{eqn:decisionCriterion}) coupled with \eqref{eqn:penaltyMOSDet}(a) followed by
(\ref{eqn:decisionCriterion}) coupled with \eqref{eqn:penaltyMOSDet}(b), (\ref{eqn:decisionCriterion}) coupled with \eqref{eqn:penaltyMOSDet}(c),
and, finally, (\ref{eqn:decisionCriterion}) coupled with modified \eqref{eqn:penaltyMOSDet}(d). However, 
such ranking may be misleading
since (\ref{eqn:decisionCriterion}) coupled with \eqref{eqn:penaltyMOSDet}(a) or \eqref{eqn:penaltyMOSDet}(b) is inclined to overestimate the hypothesis order. 
As a matter of fact, from the classification histograms
shown in Figure \ref{figcoherent2}, such behavior is evident, whereas 
(\ref{eqn:decisionCriterion}) coupled with \eqref{eqn:penaltyMOSDet}(c) or 
modified \eqref{eqn:penaltyMOSDet}(d) exhibits a more reliable performance
with a percentage of correct classification greater than $80$\% for each hypothesis.
\begin{figure}
\centering
\includegraphics[width=0.45\textwidth]{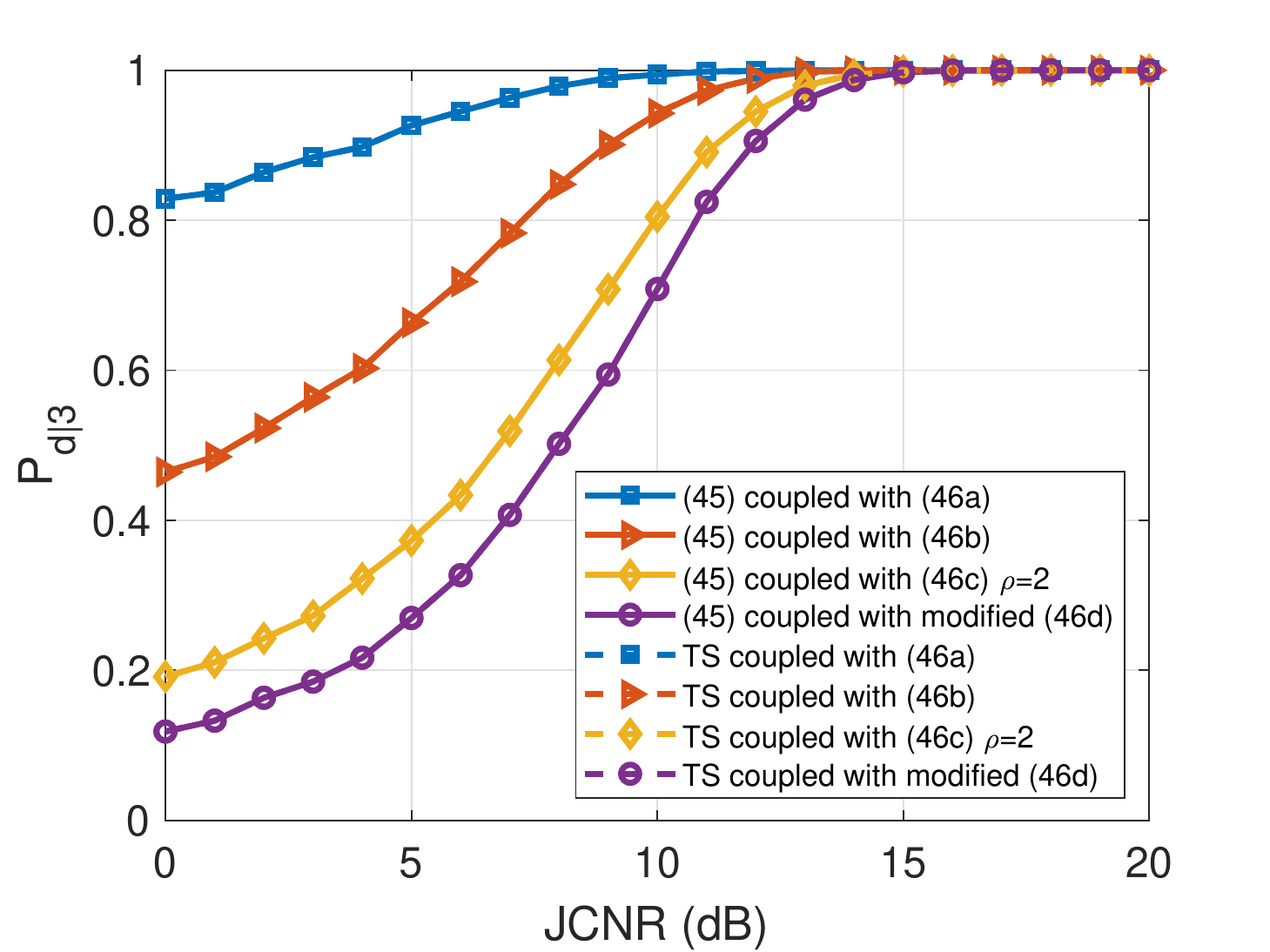}
\caption{Detection in the presence of coherent jammers. $P_{d|3}$ versus JCNR for 
the considered architectures assuming $N=16$, $K=32$, and $\text{SNR}=20$ dB.}
\label{figcoherent1}
\end{figure}
\begin{figure}
\centering
\includegraphics[width=0.99\columnwidth]{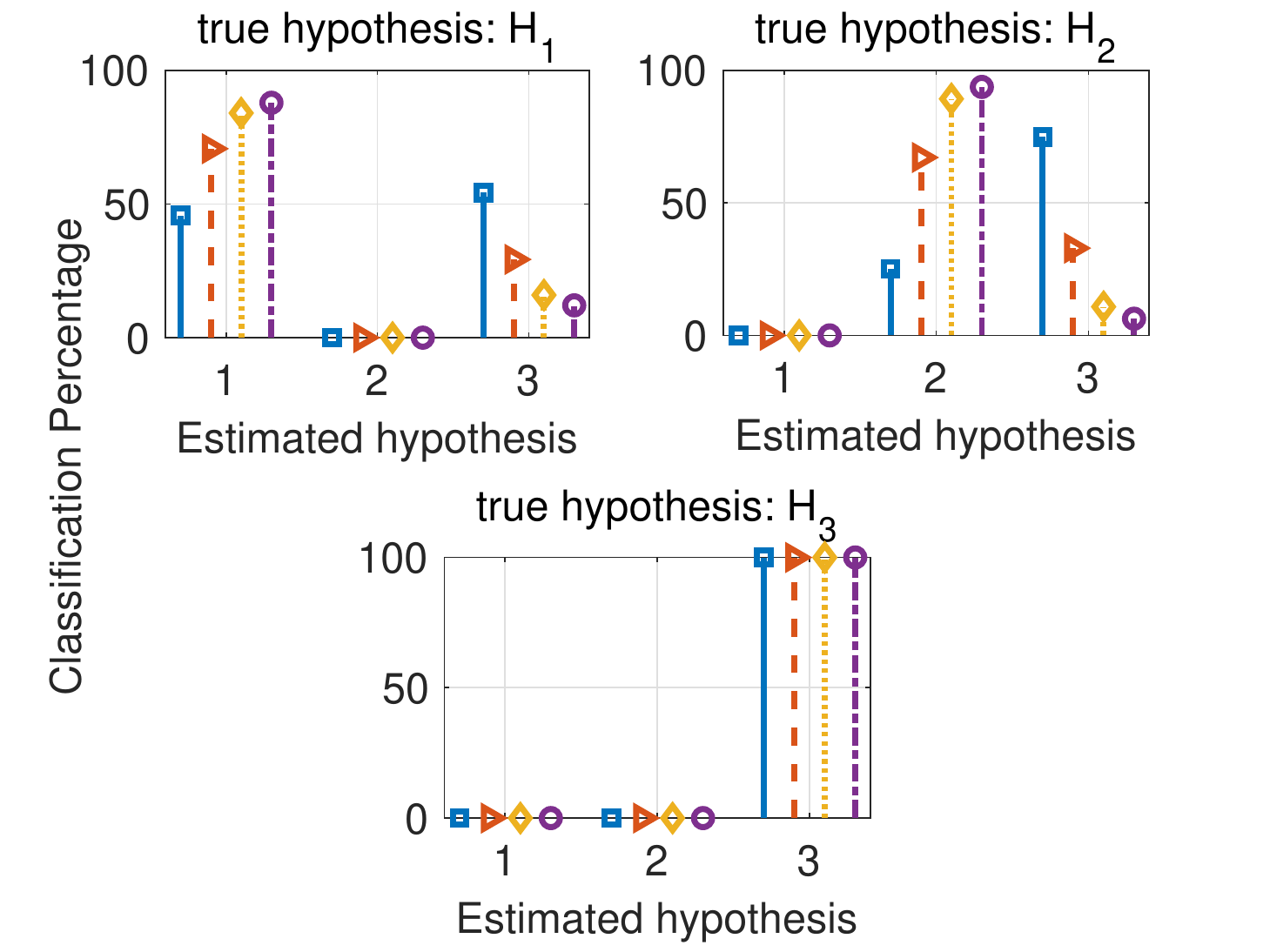}
\caption{Detection in the presence of coherent jammers. 
Classification histograms for each hypothesis 
($\boxempty$-marked (\ref{eqn:decisionCriterion}) coupled with (\ref{eqn:penaltyMOSDet})(a), $\triangleright$-marked 
(\ref{eqn:decisionCriterion}) coupled with (\ref{eqn:penaltyMOSDet})(b), 
$\diamond$-marked (\ref{eqn:decisionCriterion}) coupled 
with (\ref{eqn:penaltyMOSDet})(c) $\rho=2$, $\circ$-marked (\ref{eqn:decisionCriterion}) 
coupled with modified (\ref{eqn:penaltyMOSDet})(d)) assuming $N=16$, $K=32$, $\text{SNR}=20$ dB, 
and $\text{JCNR}=20$ dB.}\label{figcoherent2}
\end{figure}

\subsection{Range-spread Targets}
The last illustrative example is related to range-spread targets whose size is greater than the range resolution 
leading to scattering centers belonging to several contiguous range bins. Moreover, we assume that target
size in terms of contiguous range bins and position are not known. Thus, instead of the conventional detection
procedure which examines one range bin at time, we resort to the proposed framework to devise an 
architecture that processes a set of contiguous range bins (window under test).
This scenario naturally leads to a multiple-hypothesis test where each alternative hypothesis is 
associated with a specific target size and position. Note that this detection problem is similar to that 
addressed in \cite{1605248} except for the size of the extended target which is not known here. 
Summarizing, the multiple hypothesis test modeling the problem at hand has the following expression
\be
\label{eqn:ProblemRSTargets}
\left\{
\begin{array}{l}
H_0: 
\bz_l=\bn_l,\, l\in \Omega_T,\, \bz_k=\bn_k,\, k \in \Omega_S\\
H_{m}: 
\left\{
\begin{array}{l}
\bz_l = \alpha_l\bv+\bn_l,\, l\in \Omega_m\subseteq \Omega_T,\, \bz_k=\bn_k,\, k \in \Omega_S\\
\bz_l = \bn_l,\, l\in \Omega_T\setminus\Omega_m
\end{array}
\right.
\end{array}
\right.
\ee
where $\bv$ is defined in the previous subsection, 
$\Omega_T=\left\{1,\ldots, L\right\}$, $\Omega_S=\left\{L+1,\ldots, L+K\right\}$, 
$\Omega_m\subseteq \Omega_T$ is the set of consecutive range bins containing useful signal components.

Based upon the results of \cite{GLRT-based}, the logarithm of the GLR for the above problem can be written as
\be
\begin{split}
\widehat{\Lambda}_{1}&(\bZ;m) = (L+K)\bigg\{\log\det\left(\bS_0\right)-\log\det\bigg[\\
&\sum_{l\in\Omega_m} \left(\bz_l-\frac{\bv^{\dag}\bS_1^{-1}\bz_l}{\bv^{\dag}\bS_1^{-1}\bv}\bv\right) \left(\bz_l-\frac{\bv^{\dag}\bS_1^{-1}\bz_l}{\bv^{\dag}\bS_1^{-1}\bv}\bv\right)^{\dag} +\bS_1\bigg]\bigg\},
\end{split}
\ee
where $\bS_0 = \sum_{l=1}^{L} \bz_l\bz_l^{\dag} + \bS$ with $\bS = \sum_{k=L+1}^{L+K} \bz_k\bz_k^{\dag}$ and
$\bS_1 = \sum_{l\in\Omega_T \setminus\Omega_m} \bz_l\bz_l^{\dag} + \bS$.
The number of unknown parameters is $p=2 |\Omega_m|+1+N^2$ while $T=2(L+K)N$. 
The penalty factor raising from the asymptotic approximation of \eqref{eqn:BIC_full} for $K\rightarrow +\infty$
is the same as in the previous subsections.

The performance of the proposed MOS-based detectors are assessed assuming the same ICM as 
in the previous 
subsection (see \eqref{eq:expcov}). In addition, the target is assumed to be located 
at the antenna boresight and occupying the range bins $\{4,5\}$ within a window under test
of size $L=10$. After suitably labeling all the alternative hypotheses through the index
$m=1,\ldots,55$, the index associated with the actual position and size of the target is $m=14$.
Thus, the figure of merit becomes $P_{d|14}$ and it is estimated in Figure \ref{figspread} 
as a function of the SINR whose expression is given by
$\text{SINR} = {\sum_{l\in \{4,5\}}} |\alpha_l|^2 \bv^{\dag} \bM^{-1} \bv$ with CNR$=20$ dB.
\begin{figure}
\centering
\includegraphics[width=0.45\textwidth]{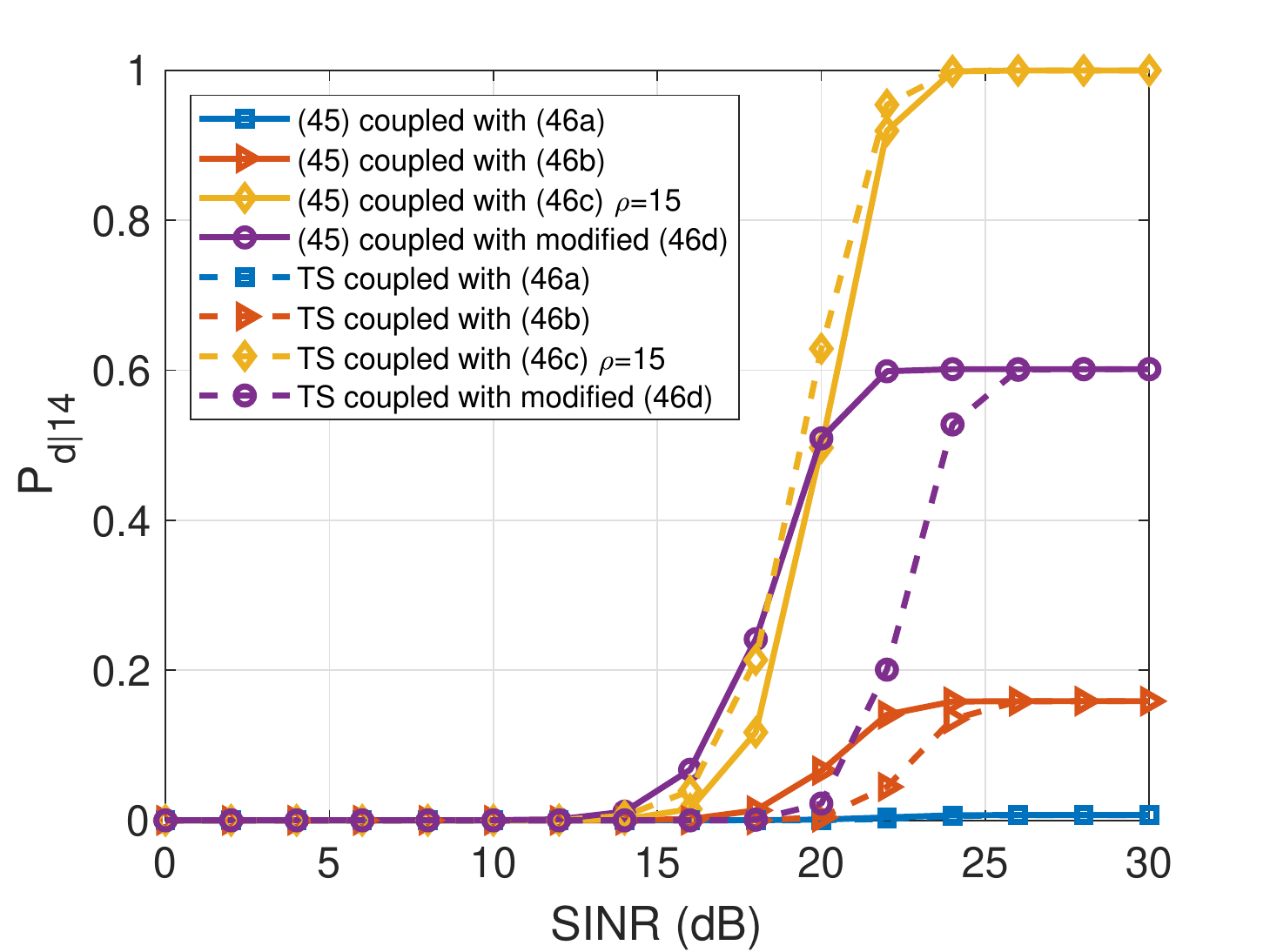}
\caption{Range-spread targets detection. $P_{d|14}$ versus SINR for a radar with $N=16$ spatial channels and $K=32$ secondary data.}\label{figspread}
\end{figure}
The results show that only the curve related to \eqref{eqn:decisionCriterion} coupled with \eqref{eqn:penaltyMOSDet}(c)
(and its TS counterpart) with $\rho=15$ achieves $P_{d|14}=1$, whereas the curves of the other decision rules 
saturate to low values with the newly proposed architectures overcoming the TS competitors. 
This situation can be explained analyzing the next figure where in place of the
classification histograms, we show the Root Mean Square Error (RMSE) for target size and position. The 
choice of these figures of merit
is due to the huge number of hypotheses that makes the histogram readability very difficult. 
The inspection of this figure points out that the estimates returned by
\eqref{eqn:decisionCriterion} coupled with \eqref{eqn:penaltyMOSDet}(a), \eqref{eqn:penaltyMOSDet}(b), 
and modified \eqref{eqn:penaltyMOSDet}(d) take on a constant value for high SINR values
with the side effect of lowering the correct detection
probability. On the other hand, the RMSE curves associated with
\eqref{eqn:decisionCriterion} coupled with \eqref{eqn:penaltyMOSDet}(c) 
decrease to zero as the SINR increases by virtue of the tuning parameter,
which is set in order to contrast the overestimation behavior generated by the presence of
several nested hypotheses.

\begin{figure}
\centering
\subfigure[]{\includegraphics[width=0.24\textwidth]{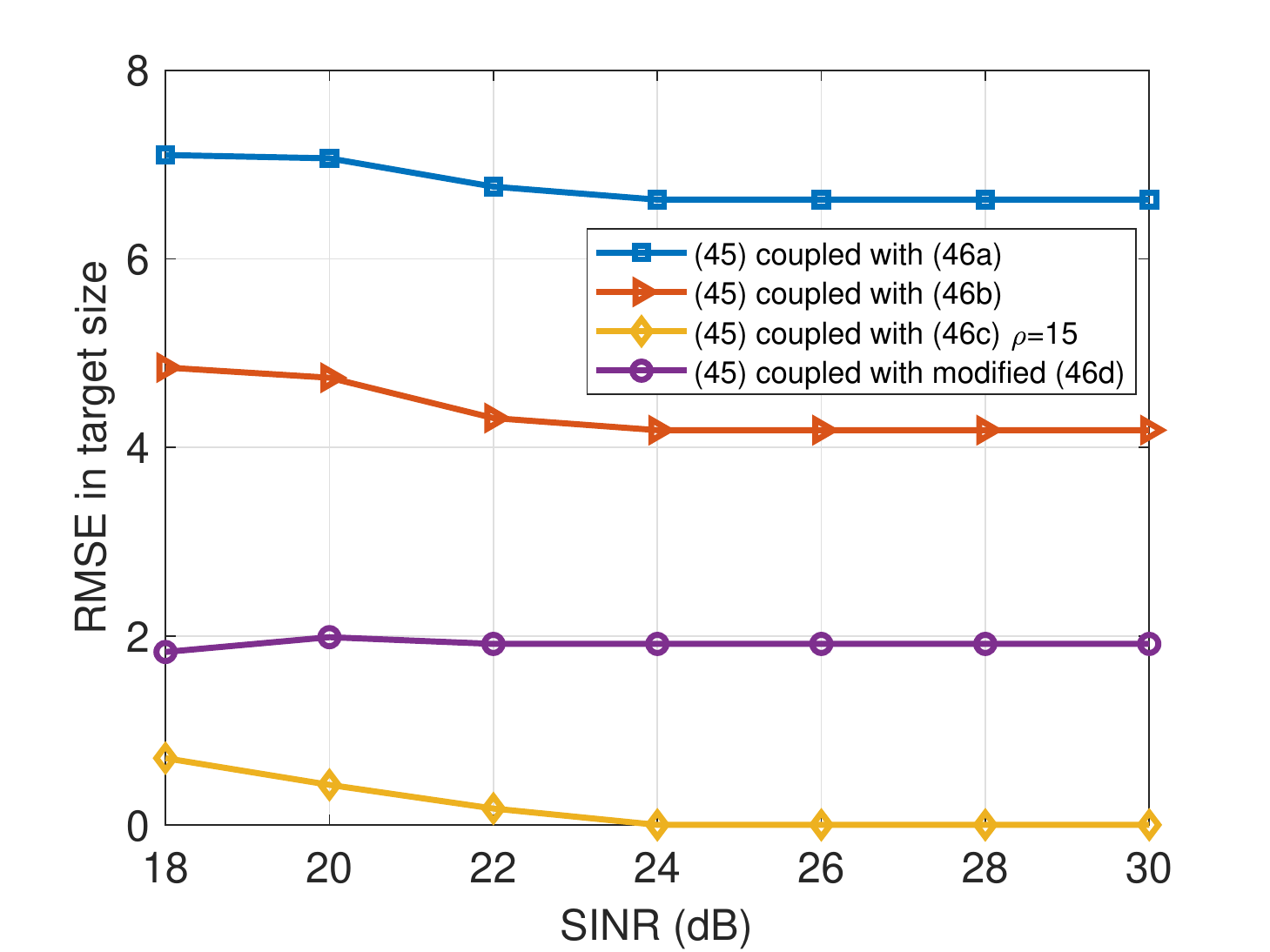}}
\subfigure[]{\includegraphics[width=0.24\textwidth]{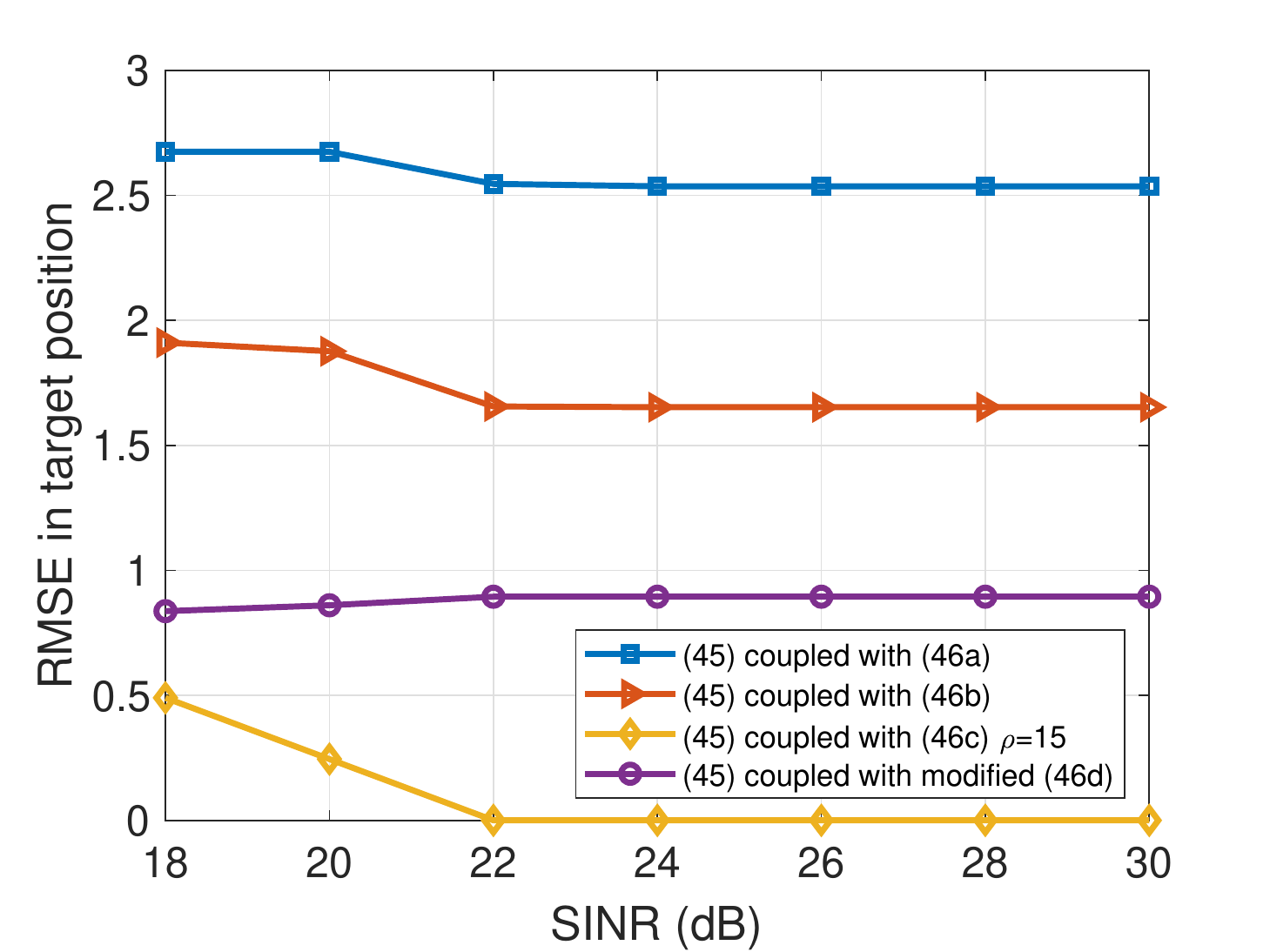}}
\caption{Range-spread targets detection. RMSE of estimated (a) extention and (b) position versus SINR. 
A radar with $N=16$ spatial channels and $K=32$ secondary data is considered.}\label{figspread2}
\end{figure}

\section{Conclusions}\label{Sec:Conclusions}
In this paper, we have developed a new framework based upon information theoretic criteria to address
multiple hypothesis testing problems. Specifically, we have considered multiple (possible nested) alternative 
hypotheses and only one null hypothesis. Such problems frequently rise in radar scenarios of practical
interest. The proposed design procedure exploits the KLIC to come up with decision statistics that 
incorporate a penalized GLR, where the penalty term depends on the number of unknown parameters, which, in turn,
is related to the specific alternative hypothesis under consideration. Interestingly,
such framework provides an information theoretic derivation of the GLRT and lays the foundation
for the design of detection architectures capable of operating in the presence
of multiple alternative hypotheses solving the limitations of the MLA which exhibits an overestimation
inclination when the hypotheses are nested. Moreover, we have shown that under some specific conditions,
decision schemes devised within this framework ensure the CFAR property.
Finally, we have applied the new design procedure to three different radar detection problems and investigated
the performance of four new detectors also in comparison with the analogous TS counterparts, whose structure 
is more complex to be implemented in real systems and that are slightly more time demanding than the former.
The analysis has singled out \eqref{eqn:decisionCriterion} coupled
with \eqref{eqn:penaltyMOSDet}(c) as the first choice at least for the considered cases, whereas
\eqref{eqn:decisionCriterion} coupled with \eqref{eqn:penaltyMOSDet}(d) arises as the second choice
since it is a parameter-free decision scheme and represents a good compromise between detection 
and classification performance in most cases.
However, it is important to notice that different study cases require further investigations 
which may lead to different results.

Future research tracks may encompass the analysis where the GLR is replaced by asymptotically equivalent
statistics or the derivation of other decision rules based upon the joint ML and maximum a posteriori 
estimation procedure testing several priors for the number of parameters.

\section*{Acknowledgments}
The authors would like to thank Prof. A. De Maio and Prof. G. Ricci for the interesting discussions.

\appendix
Decision rule \eqref{eqn:decisionCriterion} can be viewed as the result of a suitable log-likelihood regularization, which aims at overcoming the previously described limitations of the ML approach in the 
case of nested models. To this end, we assume that the number of parameters of interest under $H_1$ is a discrete random variable with a prior promoting low-dimensional model in order to mitigate the natural 
inclination of the ML approach to overestimate the model size.

With the above remarks in mind, a possible probability mass function for $p_{r}$ is chosen as
\be
\pi(p_{r}) = \frac{1}{A} e^{-g(p_{r})}, \quad p_{r}\in\Omega_r,
\ee
where $g(\cdot)$ is a positive and increasing function of $p_r$ and $A$ is a normalization constant such that
$A = \sum_{p_r\in\Omega_r} e^{-g(p_{r})}$.
For the next development, it is important to note that if $p_r$ is a continuous random variable, the joint 
pdf of $\bZ$ and $p_r$ can be recast as
\be
f(\bZ,p_r;\btheta_{p_r}, \btheta_{p_s}) = f(\bZ;\btheta_{r,1}, \btheta_{s}|p_r)f(p_r),
\ee
where $f(p_r)$ is the pdf of $p_r$. However, since $p_r$ is a discrete random variable, the joint pdf of 
$\bZ$ and $p_r$ is available only in the generalized sense exhibiting Dirac delta functions in correspondence 
of the values assumed by $p_r$. Thus, if we consider the following decision rule
\be
\frac{ \dmax_{p_r\in\Omega_r} \dmax_{\btheta_{r,1}, \btheta_{s}} f(\bZ; \btheta_{r,1}, \btheta_{s}|p_r)\pi(p_r)} {\dmax_{\btheta_{r,0}, \btheta_{s}} f_0(\bZ;\btheta_{r,0}, \btheta_{s})}\test\eta,
\label{eqn:MAP_ML_test_00}
\ee
at the numerator, we attempt to maximize the multipliers of the Dirac delta functions, i.e., it only 
focuses on the lines where the probability masses are located. In this sense, the optimization at
the numerator can be interpreted in terms of the joint ML and maximum a posteriori estimation procedure \cite{jointMAP_ML,Stoica1}. The above architecture can be written as
\be
\label{eqn:MAP_ML_test}
\begin{split}
\dmax_{p_r\in\Omega_r}&\dmax_{\btheta_{r,1}, \btheta_{s}} \log{ f(\bZ;\btheta_{r,1}, \btheta_{s}|p_r)
\pi(p_r)}\\
& - \dmax_{\btheta_{r,0}, \btheta_{s}} \log f_0(\bZ;\btheta_{r,0}, \btheta_{s})\test\eta,
\end{split}
\ee
After maximizing with respect to $\btheta_{r,1}$, $\btheta_{r,0}$, and $\btheta_{s}$, 
given $p_r$, \eqref{eqn:MAP_ML_test} can be recast as
\be
\dmax_{p_r\in\Omega_r} \left\{\log\frac{f(\bZ;\widehat{\btheta}_{r,1}, \widehat{\btheta}_{s,1}|p_r)}
{f(\bZ;\widehat{\btheta}_{r,0},\widehat{\btheta}_{s,0})} +\log \pi(p_r) \right\}\test \eta.
\ee
The above decision scheme has the same structure as \eqref{eqn:decisionCriterion} and, when $g(p_r)=h(p_r)$, they
are equivalent. Nevertheless, several alternatives can be used for $g(p_r)$ to come up with different
penalization terms.

%
\bibliographystyle{IEEEtran}
\bibliography{group_bib}
\end{document}